\title{Two-electron wavefunctions are 
matrix product states with 
bond dimension Three}
\author{Gero Friesecke and Benedikt R. Graswald \\[1mm] 
\small Department of Mathematics, Technical University of Munich, Germany \\
\small {\tt \href{mailto:gf@ma.tum.de}{gf@ma.tum.de}, \href{mailto:graswabe@ma.tum.de}{graswabe@ma.tum.de} }}
\renewcommand{\phi}{\varphi}
\renewcommand{\epsilon}{\varepsilon}
\DeclareMathOperator{\Id}{Id}
\DeclareMathOperator{\rk}{rank}
\newcommand{\gegen}[2]{\xrightarrow{#1 \to #2}}
\newcommand{\MPS}{\mathrm{MPS}}
\newcommand{\limit}[2]{\lim\limits_{#1 \to #2}}
\newcommand{\R}{\mathbb{R}}
\newcommand{\C}{\mathbb{C}}
\newcommand{\N}{\mathcal{N}}
\newcommand{\V}{\mathcal{V}}
\renewcommand{\H}{\mathcal{H}}
\newcommand{\F}{\mathcal{F}}
\newcommand{\norm}[1]{\left\lVert#1\right\rVert}
\newcommand{\bigzero}{\mbox{\normalfont\Large\bfseries 0}}
\newcommand{\rvline}{\hspace*{-\arraycolsep}\vline\hspace*{-\arraycolsep}}
\newtheorem{theorem}{Theorem}
\newtheorem*{theorem*}{Theorem}
\newtheorem{lemma}{Lemma}
\newtheorem{proposition}{Proposition}
\newtheorem{example}{Example}
\begin{document}
\large
\maketitle
\begin{abstract}
We prove 
the statement in the title,
for a suitable (wavefunction-dependent) choice of the underlying orbitals,  
%that two-electron wavefunctions can be represented, in a suitable basis, as matrix product states with bond dimension Three, 
%%for arbitrary single-particle Hilbert spaces.
%including the infinite-dimensional space $L^2(\R^3)\otimes \C^2$ for electrons. 
and show that Three is optimal.
%Further, we show that bond dimension Three is optimal, and achieved when choosing the underlying orbitals to be the natural orbitals of the wavefunction. As a corollary, 
Thus for two-electron systems, the QC-DMRG method with bond dimension Three combined with fermionic mode optimization exactly recovers the FCI energy.
\end{abstract}
%\tableofcontents
%\listoftodos

\section{Introduction}

The $N$-electron Schr\"odinger equation is a partial differential equation in $\R^{3N}$ and its direct numerical solution is prohibited for large $N$ by the curse of dimension. As a consequence, a large variety of approximate methods have been developed since the early days of quantum mechanics, starting with the work of Thomas, Fermi, Dirac, Hartree, and Fock. In the past decade, the Quantum Chemistry Density Matrix Renormalization Group (QC-DMRG) method \cite{white_martin, mitrushenkov2001quantum, chan2002highly, legeza2003controlling} has become the state-of-the-art choice for systems with up to a few dozen electrons; see \cite{szalay2015tensor} for a recent review.  

In QC-DMRG, one chooses a suitable finite single-particle basis, makes a matrix product state (MPS) alias tensor train ansatz for the coefficient tensor of the many-particle wavefunction in Fock space, and optimizes the Rayleigh quotient over the matrices (see Sections \ref{sec:mps} and \ref{sec:two_particle_finite} for a detailed description). The key parameter in the method is the maximal allowed size of the matrices, called bond dimension. For bond dimension 1 the MPS ansatz reduces to a single Slater determinant built from the basis functions. For large bond dimension the ansatz recovers all wavefunctions in the Fock space, but large means impractically large (more precisely: $2^{L/2}$, where $L$ is the number of single-particle basis functions \cite{schollwock2011density}).

It has long been known that the accuracy strongly depends on the choice of basis, and can typically be improved by re-ordering the basis (see \cite{BLMR, dupuy2021inversion, szalay2015tensor}; also, see \cite{graswald2021electronic} for extreme examples where ordering does not yield an improvement). 

This paper is motivated by an empirical phenomenon observed by Krumnow, Veis, Legeza, and Eisert \cite{krumnow-legeza2016, krumnow-legeza2021}: going beyond ordering and {\it optimizing over fermionic mode transformations} (i.e., general unitary transformations of the single-particle basis) can reduce the approximation error a great deal further in systems of interest. QC-DMRG together with optimization over the single-particle basis as introduced in \cite{krumnow-legeza2016, krumnow-legeza2021} can be viewed as a generalization of the classical Hartree-Fock method, to which it reduces for bond dimension 1 (see Section \ref{sec:two_particle_finite}). 

In the absence of previous mathematical results on the influence of mode transformations on the approximation error, we investigate here the simplest case $N=2$. We find a dramatic effect, namely a {\it reduction of the bond dimension needed for exactness of the method from $2\! +\!\tfrac{L}{2}$ to $3$},  where $L$ is the number of single-particle basis functions (Theorem \ref{thm:exact} in Section \ref{sec:two_particle_finite}). This is proved by showing that general two-particle wavefunctions can be represented exactly with bond dimension $3$ after a (wavefunction-dependent) optimal mode transformation, with $3$ being optimal. See Theorems \ref{thm:upper_bound} and \ref{thm:optimality} in Section  
\ref{sec:two_particle_finite}. 

Previous exact representations of quantum states in the form of low-bond-dimension MPS were, to our knowledge, limited to very special states, the prototype example being the AKLT state from spin physics \cite{aklt} which arises as the ground state of a particular translation invariant Hamiltonian. On the other hand, the present result -- unlike that in \cite{aklt} -- is limited to $N=2$ (see the Conclusions for further discussion of this point).
%results on QC-DMRG, and its analogue for spin systems, yielded exact representations with low bond dimension only for very  special states like the ground state of the AKLT Hamiltonian \cite{aklt}; instead,  research has focused on  estimating the asymptotic decay of the approximation error as the bond dimension increases (see \cite{dupuy2021inversion} for a recent result on linear combinations of a small number of Slater determinants, and see \cite{hastings2006solving} for a fundamental result on ground states of 1D spin chains with finite-range interaction). 

Finally, we remark that the exact bond-dimension-three representation of two-fermion wavefunctions carries over to the infinite-dimensional single-particle Hilbert space $L^2(\R^3)\otimes\C^2$ of full two-electron quantum mechanics, as shown in the last part of this paper.

\section{Fock space and occupation representation} 

{\it Fermionic Fock space.} We first consider a finite dimensional single-particle Hilbert space $\H_L$, whose dimension we denote by $L$. The associated state space for a system of $N$ fermions is the $N$-fold antisymmetric product  $\mathcal{V}_{N,L}:= \bigwedge_{i=1}^N \H_L$, and the resulting Fock space is defined as the direct sum of the $N$-particle spaces, 
\begin{equation} \label{eq:Fock}
   \F_L := \bigoplus_{N=0}^L {\mathcal V}_{N,L},
\end{equation}
where $V_{0,L}\widetilde{=}\C$ is spanned by the vacuum state $\Omega$. When the particles are electrons, $\H_L$ would correspond to a subspace of $L^2(\R^3)\otimes\C^2$ spanned by $L$ spin orbitals. If the orbitals are the occupied and lowest unoccupied eigenstates of the Hartree-Fock Hamiltonian associated with the electronic Schr\"odinger equation,  $\mathcal{V}_{N,L}$ is known in physics as the full configuration interaction (full CI) space (see e.g.~\cite{helgaker}).

Now given an orthonormal basis $\{\phi_1, \ldots, \phi_L\}$ of the single-particle Hilbert space $\H_L$, we can write any element $\Psi\in \F_L$ in the form
\begin{align} \label{eq:state_fci_representation}
\Psi = c_0 \Omega \; + \;  \sum_{i=1}^L c_i \varphi_i \; + \!\! \sum_{1\le i<j\le L} \!\!\! c_{ij}|\varphi_i\varphi_j\rangle \; + \!\! \sum_{1\le i<j<k\le L} \!\!\! c_{ijk} |\varphi_i\varphi_j\varphi_k\rangle  \; +  \; \ldots \; ,   
\end{align}
with $|\varphi_{i_1} \ldots \varphi_{i_N}\rangle$ denoting the antisymmetric tensor product alias Slater determinant
\begin{equation} \label{eq:sd}
   |\varphi_{i_1} \ldots \varphi_{i_N}\rangle = \varphi_{i_1}\wedge \ldots \wedge \varphi_{i_N} \in {\mathcal V}_{N,L}.
\end{equation}

{\it Occupation representation.}
Instead of the above 'first quantized' representation, in QC-DMRG one considers a 'second quantized' representation by occupation numbers of orbitals in Fock space. A Slater determinant $|\varphi_{i_1}...\varphi_{i_N}\rangle\in \V_{N,L}$ is represented by a binary string $(\mu_1, \ldots, \mu_L) \in \{0,1\}^L$, with $\mu_i$ indicating whether or not the orbital $\phi_i$ is present (occupied) or absent (unoccupied). An example with $N=4$ and $L=8$ is

\begin{minipage}{0.5 \textwidth}\vspace{-3.5mm}
\begin{align*}
    |\phi_2 \phi_3\phi_6\phi_8 \rangle \longleftrightarrow  (0,1,1,0,0,1,0,1), 
\end{align*}
\end{minipage}
\begin{minipage}{0.5 \textwidth}

\includegraphics[width = 0.8 \textwidth]{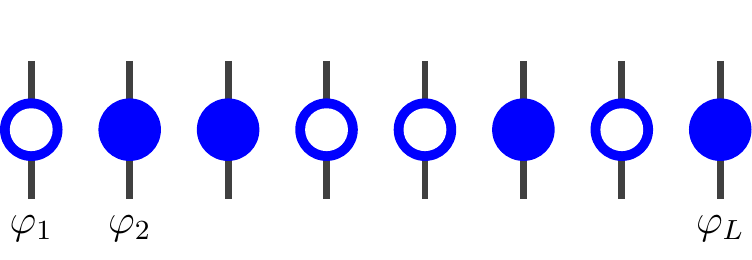}
\end{minipage}

\noindent
since $\varphi_1$ is unoccupied, $\varphi_2$ is occupied, $\varphi_3$ is occupied, and so on. The Slater determinant \eqref{eq:sd} indexed by its binary label is in the following denoted $\Phi_{\mu_1...\mu_L}$, that is to say
\begin{equation} \label{eq:sdbin}
    \Phi_{\mu_1...\mu_L} := |\varphi_{i_1}...\varphi_{i_N}\rangle \mbox{ if }\mu_{i}=1 \mbox{ exactly when }i\in\{i_1,...,i_N\},~ i_1<\ldots<i_N. 
\end{equation}
The coefficients in the expansion \eqref{eq:state_fci_representation} indexed by the corresponding binary label are called $C_{\mu_1...\mu_L}$, that is to say  
\begin{align}
     C_{\mu_1 \ldots \mu_L} 
     = 
     c_{i_1\ldots i_N} \mbox{ if } \mu_i=1 \text{ precisely when } i \in \{i_1,\ldots,i_N \},~i_1<\ldots<i_N,
\end{align}
yielding the occupation representation
\begin{equation} \label{eq:state_occrep}
  \Psi \; = \! \sum_{\mu_1,...,\mu_L=0}^1 \!\! C_{\mu_1...\mu_L} \Phi_{\mu_1...\mu_L}.
\end{equation}

\section{Matrix product states}\label{sec:mps}

A matrix product state (MPS) or tensor train (TT) with respect to the basis $\{\varphi_i\}_{i=1}^L$ 
with size parameters ('bond dimensions') $r_i$ ($i=1,...,L-1$) is a state of the form
\begin{equation} \label{eq:MPS}
   \Psi = \! \sum_{\mu_1,\ldots \mu_L =0}^1  \! A_1[\mu_1] A_2[\mu_2] ... A_L[\mu_L] ~\Phi_{\mu_1 ... \mu_L} \in \F_L
\end{equation}
where for every $(\mu_1,\ldots,\mu_L)$,  $A_i[\mu_i]$ is a  $r_{i-1} \times r_i$ matrix, with the convention $r_0=r_L=1$. 
Writing out the above matrix multiplications, 
\begin{align*}
    A_1[\mu_1] A_2[\mu_2]\ldots A_L[\mu_L]
    =
    \sum_{\alpha_1 =1}^{r_1} \sum_{\alpha_2 =1}^{r_2}\ldots \!\! \!\!\!\sum_{\alpha_{L-1} =L-1}^{r_{L-1}}\!\! \big(A_1[\mu_1]\big)_{\alpha_1} \big(A_2[\mu_2]\big)_{\alpha_1\alpha_2}\ldots \big(A_L[\mu_L]\big)_{\alpha_{L-1}}.
\end{align*}
Hence the $A_i$ can be viewed as tensors of order $3$ (depending on three indices $\alpha_{i-1}$, $\mu_i$, $\alpha_i$) in $\C^{r_{i-1}\times 2 \times r_i}$. The name 'bond dimensions' for the $r_i$ has nothing to do with chemical bonds, but is related to the standard graphical representation of MPS in Figure \ref{fig:tt_representation}, in which each contraction index $\alpha_i$ is represented by a horizontal 'bond'. The minimal bond dimensions with which a given state can be represented have a well known meaning as ranks of matricizations of the coefficient tensor $C$, as recalled in Lemma \ref{L:rank}. 
The set of tensor trains (TT) or matrix product states (MPS) with respect to the basis $\{\varphi_i\}_{i=1}^L$ with bond dimensions $r_i$ ($i=1,...,L-1$)  is denoted by 
\begin{equation}\label{eq:mps_representation_finite}
    \mathrm{MPS}\bigl(L,\{r_i\}_{i},\{\varphi_i\}_{i}\bigr) \; \; \subseteq \;\; \F_L.
\end{equation}
For bond dimension One, i.e.~$r_i=1$ for all $i$, the MPS set \eqref{eq:mps_representation_finite} reduces to the set of Slater determinants $\Phi_{\mu_1,...,\mu_L}$ built from the basis functions. 
Representing arbitrary states in $\F_L$ as MPS is possible, but requires bond dimensions $2^{L/2}$, i.e.~bond dimensions growing exponentially with $L$ \cite{schollwock2011density}. (Here we have assumed that $L$ is even.) A simple example where this exponential bound is saturated is the Slater determinant with orbitals $\psi_i : =  \big( \phi_i + \phi_{i+L/2} \big)/\sqrt{2}$ for $i=1,\ldots,L/2$, see e.g.~\cite{dupuy2021inversion, graswald2021electronic}. (In this example the bond dimension could be lowered to $2$ by re-ordering the basis; an example where the exponential bound is saturated {\it regardless} of the ordering of the basis is given in~\cite{graswald2021electronic}.)
Here we are interested in the best bond dimensions achievable by choosing the basis optimally, i.e.~performing an optimal fermionic mode transformation.
%Therefore it becomes interesting whether one can make any statements about the ranks of the tensor independent of the underlying basis. Our optimality theorem deals with exactly this setting. 

\begin{figure}[h!]
    \centering
    \includegraphics[width = 0.85\textwidth]{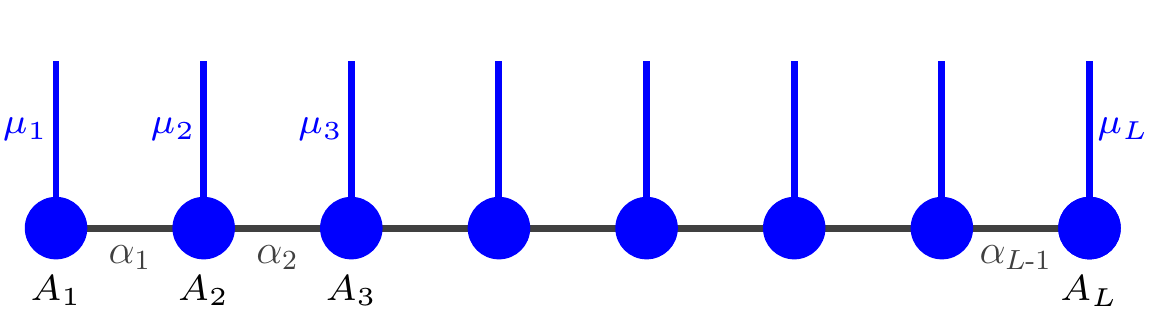}
    \caption{Graphical representation of a matrix product state in the finite-dimensional case; the virtual indices $\alpha_j$ are contracted over.}
    \label{fig:tt_representation}
\end{figure}

% Furthermore let us remark that we have the equivalent representation
% \begin{equation}\label{eq:mps_representation_finite2}
%     \mathrm{MPS}(L,r,N) = \bigg\{ \sum_{\mu_1,\ldots \mu_L =0}^1 \begin{pmatrix}
%     1  \ldots 1
%     \end{pmatrix} A_1[\mu_1]\cdot A_2[\mu_2]\cdots A_L[\mu_L] \begin{pmatrix}
%     1 \\ \vdots \\ 1
%     \end{pmatrix} ~\Phi_{\mu_1, \ldots,\mu_L} \bigg\},
% \end{equation}
% where $A_1[\mu_1]$ and $A_L[\mu_L]$ are now square matrices of size $r_1$ and $r_{L-1}$, respectively.

\section{Low-rank representation of two-electron wavefunctions and exactness of QC-DMRG with mode optimization}\label{sec:two_particle_finite}

We now show that for representing two-electron wavefunctions in the MPS format, bond dimension Three always suffices independently of $L$, provided the basis of the single-particle space is chosen optimally. 

In the following, the $N$-particle Hilbert space ${\mathcal V}_{N,L}=\bigwedge_{i=1}^N \H_L$ will be identified with the $N$-particle sector 
$$
   \{ \Psi\in\F_L \, : \, {\mathcal N}\Psi=N\Psi\}
$$
of Fock space, to which it is canonically isomorphic. Here ${\mathcal N}=\sum_{i=1}^L a^\dagger(\varphi_i)a(\varphi_i)$ is the number operator, with $a^\dagger(\varphi)$ and $a(\varphi)$ denoting the usual creation and annihilation operators associated with an orbital $\varphi\in \H_L$. Also, we will make use of the single-particle reduced density matrix $\gamma_\Psi \, : \, \H_L\to\H_L$ defined by 
$$
   \langle \Psi, \, a^\dagger(\varphi_i)a(\phi_j) \Psi \rangle = \langle \phi_j, \, \gamma_\Psi \phi_i \rangle \; \; \; \mbox{ for all }i,j.
$$

Our precise result on two-particle states is as follows.

\begin{theorem}[Upper bound on the bond dimensions]\label{thm:upper_bound}
For any two-particle state $\Psi\in {\mathcal V}_{2,L}$ with $L\geq 4$, there exists a basis $\{\varphi_1,...,\varphi_L\}$ of the  single-particle Hilbert space $\H_L$ for which $\Psi$ is an MPS with bond dimensions
\begin{equation}\label{eq:optimal_rank_vector}
    (r_1,...,r_{L-1}) =  (2,\underbrace{2,3,\ldots,2,3}_{L-4 \text{ times}},2,2).
\end{equation}
If $L\le 3$, we can simply achieve $(r_1,...,r_{L-1})=(1,...,1)$.
\end{theorem}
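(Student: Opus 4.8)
The plan is to reduce $\Psi$ to a canonical ``geminal'' form by a single mode transformation, and then read off the bond dimensions as ranks of matricizations via Lemma~\ref{L:rank}. Writing $\Psi=\sum_{i<j}c_{ij}\,\varphi_i\wedge\varphi_j$ in an arbitrary starting basis, antisymmetry of the wedge product lets me extend the coefficient array to an antisymmetric matrix $C=(c_{ij})\in\C^{L\times L}$. A change of single-particle basis $\varphi_i\mapsto\sum_a U_{ai}\varphi_a$ with $U$ unitary transforms this matrix by \emph{congruence}, $C\mapsto UCU^{T}$ (not by similarity), so I am free to bring $C$ into any normal form reachable under unitary congruence. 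The relevant tool is the antisymmetric singular value decomposition (Youla's theorem): every complex antisymmetric matrix is unitarily congruent to a block-diagonal matrix with $2\times2$ blocks $\left(\begin{smallmatrix}0&\lambda_k\\-\lambda_k&0\end{smallmatrix}\right)$, $\lambda_k\ge 0$, plus a zero block when $C$ is singular (in particular when $L$ is odd). In the corresponding basis this is exactly the fermionic Schmidt (natural-geminal) decomposition
\[
   \Psi=\sum_{k=1}^{\lfloor L/2\rfloor}\lambda_k\,\varphi_{2k-1}\wedge\varphi_{2k}.
\]
This step is where essentially all the content lives; the rest is bookkeeping.

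In this basis the coefficient tensor of \eqref{eq:state_occrep} is supported on a very thin set: $C_{\mu_1\ldots\mu_L}=\lambda_k$ exactly when the pair $(\mu_{2k-1},\mu_{2k})=(1,1)$ and all other occupation numbers vanish, and $C_{\mu_1\ldots\mu_L}=0$ otherwise. By Lemma~\ref{L:rank} the minimal bond dimension $r_i$ equals the rank of the $i$-th matricization $C^{(i)}$, whose rows are indexed by $(\mu_1,\ldots,\mu_i)$ and columns by $(\mu_{i+1},\ldots,\mu_L)$; it therefore suffices to bound these ranks above by the entries of \eqref{eq:optimal_rank_vector}. I would classify the nonzero configurations of $C^{(i)}$ by the position of the unique doubly occupied pair relative to the cut at $i$: (a) pairs lying entirely to the left, (b) a pair straddling the cut, and (c) pairs lying entirely to the right.

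For a cut at an even position $i=2m$ no pair can straddle, so only families (a) and (c) occur. Family (a) populates a single column (the one whose right half is identically $0$) in rank-one fashion, and family (c) populates a single row (left half identically $0$); since these supports are disjoint, $\rk C^{(2m)}\le 2$. For a cut at an odd position $i=2m-1$ the straddling family (b) contributes one additional matrix entry, sitting in a row and a column disjoint from those used by (a) and (c); thus $C^{(2m-1)}$ is, up to permutation, block-diagonal with three rank-$\le 1$ blocks, giving $\rk C^{(2m-1)}\le 3$. The two even boundary cuts $i=2$ and $i=L-2$ give $2$ as any even cut does. At the odd boundary cuts $i=1$ and $i=L-1$ one outer family is empty -- no complete pair lies strictly to the left of position $1$, respectively strictly to the right of position $L-1$ -- so family (b) together with a single outer family yields $\rk\le 2$ there as well. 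Collecting these bounds -- $3$ at interior odd cuts, $2$ at even cuts and at the two odd boundary cuts -- reproduces \eqref{eq:optimal_rank_vector}, and since a state realizing \emph{smaller} matricization ranks is a fortiori an MPS with the stated (zero-padded) bond dimensions, the claim for $L\ge 4$ follows.

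Finally, the case $L\le 3$ drops out of the same canonical form: an antisymmetric matrix of size $\le 3$ has at most one nonzero Youla block, so $\Psi$ reduces to the single Slater determinant $\lambda_1\,\varphi_1\wedge\varphi_2$, which is an MPS of bond dimension one. The main obstacle, as indicated, lies entirely in the first step: recognizing that an optimal mode transformation is precisely a unitary congruence of the antisymmetric coefficient matrix, and that Youla's normal form yields the pairwise decomposition. Once $\Psi$ is paired, the rank count above is forced and robust; the fact that $3$ cannot in general be lowered is the content of the separate Theorem~\ref{thm:optimality}.
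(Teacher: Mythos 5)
Your proposal is correct, and its first half coincides with the paper's: the reduction of $\Psi$ to the paired form $\sum_k\lambda_k\,\varphi_{2k-1}\wedge\varphi_{2k}$ is exactly the paper's Lemma~1 (there attributed to Coleman--Yukalov and proved by ``spectral theory'' of the antisymmetric coefficient matrix, i.e.\ the same Youla unitary-congruence normal form you invoke). After that you genuinely diverge. The paper proceeds \emph{constructively}: it factorizes the coefficient tensor at the level of orbital pairs into $2\times2$ upper-triangular matrices $B_\ell$ via a nilpotent-matrix lemma (mimicking the bond-dimension-$2$ structure of the W-state), and then splits each $B_\ell$ into two explicit single-site tensors $A_{2\ell-1},A_{2\ell}$, which exhibits the rank vector \eqref{eq:optimal_rank_vector} directly. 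You instead bound the Schmidt ranks of all $L-1$ unfoldings by your three-family cut analysis (pair left / straddling / pair right -- the counting is correct, including the boundary cuts $i=1,\,i=L-1$ where one family is empty, and the zero-padding remark that handles degenerate $\lambda_k$ or odd $L$) and then appeal to the simultaneous-minimality clause of Lemma~\ref{L:rank} to conclude existence of the MPS non-constructively. Each route has a payoff: yours is shorter and methodologically unified with the lower-bound proof of Theorem~\ref{thm:optimality}, which argues on the very same unfoldings $M_n$; the paper's explicit tensors, on the other hand, are precisely what powers the infinite-dimensional extension (Theorem~\ref{thm:onto}), where one needs the site tensors $A_i$ to be \emph{independent of the truncation length} $L$ -- a fact read off from the explicit formulas but invisible to a pure rank count, since Lemma~\ref{L:rank} gives existence of some minimal-rank decomposition for each finite $L$ without any compatibility between them.
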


The somewhat counterintuitive looking bond dimension vector in \eqref{eq:optimal_rank_vector} is in fact optimal for generic two-particle wavefunctions. 

\begin{theorem}[Lower bound on the bond dimensions]\label{thm:optimality}

Suppose that $L\ge 4$ is even, $\Psi\in{\mathcal V}_{2,L}$, and $\gamma_\Psi$ has maximal rank (i.e., its rank equals $L$). Then the 
bond dimensions given in Theorem \ref{thm:upper_bound} are optimal, that is to say for any basis $\{\varphi_1,....,\varphi_L\}$ of the single-particle Hilbert space $\H_L$ and any MPS-representation with bond dimensions  $(r_1,\ldots,r_{L-1})$ we have 
\begin{itemize}
    \item $r_j \geq 2$ for every $j \in \{1,\ldots, L-1\}$
    \item At least one of two consecutive elements $(r_j,r_{j+1})$ for $j \in \{2,\ldots, L-2\}$ is at least  3.
\end{itemize}
Furthermore the bond dimension vector $(r_1,...,r_{L-1})$ with lowest $\ell^1$-norm $r_1+...+r_{L-1}$ is unique and given by \eqref{eq:optimal_rank_vector}.
\end{theorem}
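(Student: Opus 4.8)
The plan is to convert the entire statement into the linear algebra of the antisymmetric coefficient matrix and then read off the bond dimensions as ranks of matricizations via Lemma~\ref{L:rank}. Fix an \emph{arbitrary} basis and encode $\Psi\in\mathcal V_{2,L}$ by the antisymmetric matrix $C=(c_{pq})_{p,q=1}^{L}$, $c_{pq}=-c_{qp}$, so that $\Psi=\sum_{p<q}c_{pq}|\varphi_p\varphi_q\rangle$. The hypothesis that $\gamma_\Psi$ has maximal rank is basis independent and equivalent to $C$ being invertible (which also re-derives that $L$ is even). Cutting the index set at position $j$ writes $C$ in block form $\bigl(\begin{smallmatrix}A_j & M_j\\ -M_j^\top & B_j\end{smallmatrix}\bigr)$ with $A_j$ ($j\times j$) and $B_j$ ($(L-j)\times(L-j)$) antisymmetric. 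Because a two-particle state splits into the three particle-number sectors (two left / one–one / two right), the corresponding matricization is block diagonal and its rank is additive, giving the key formula
\[
 r_j=\rk M_j+\mathds{1}[A_j\neq 0]+\mathds{1}[B_j\neq 0].
\]
After this step both bullets are purely statements about an invertible antisymmetric $C$, holding for every basis.

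For the first bullet I would argue by contradiction. If $r_j\le 1$ then either $M_j=0$ with at most one of $A_j,B_j$ nonzero, so $C$ is block diagonal with a vanishing block and hence $\rk C\le\max(j,L-j)<L$; or $\rk M_j=1$ with $A_j=B_j=0$, so $C=\bigl(\begin{smallmatrix}0 & M_j\\ -M_j^\top & 0\end{smallmatrix}\bigr)$ has rank $2\rk M_j=2<L$. Both contradict invertibility, so $r_j\ge 2$. The same bookkeeping shows the two boundary bonds are pinned to $r_1=r_{L-1}=2$ exactly (neither block can decouple without making $C$ singular).

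The second bullet is the crux, and I would isolate a \emph{decoupling lemma}: for $L\ge 6$ and an interior cut $2\le j\le L-2$, $r_j=2$ forces $M_j=0$. Indeed, the only ways to reach $r_j=2$ other than $M_j=0$ are $\rk M_j=1$ with exactly one of $A_j,B_j$ zero, or $\rk M_j=2$ with both zero; estimating $\rk C$ by the rank of the top $j$ rows plus that of the bottom $L-j$ rows caps it at $\max(j+1,\,L-j+1,\,4)<L$ in each of these cases, a contradiction. Granting the lemma, if two adjacent interior bonds satisfy $r_j=r_{j+1}=2$ (both cuts interior, i.e.\ $2\le j\le L-3$ so that $j$ and $j+1$ lie in $\{2,\dots,L-2\}$), then $C$ block-diagonalizes across both cuts simultaneously; combining the two decouplings isolates orbital $j+1$, whose entire row and column in $C$ then vanish, contradicting invertibility. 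I expect this decoupling lemma, together with the observation that the argument genuinely needs \emph{both} neighbouring cuts to be interior, to be the main obstacle; everything else is routine rank estimates.

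Finally, for optimality of the $\ell^1$-norm I would read the two bullets as combinatorial constraints on the integer vector $(r_1,\dots,r_{L-1})$: every entry is $\ge 2$ (with $r_1=r_{L-1}=2$), and the set $\{j:r_j\ge 3\}$ must be a vertex cover of the path on the interior bonds $r_2,\dots,r_{L-2}$, a path on $L-3$ vertices. Hence $\sum_j r_j\ge 2(L-1)+\tau$, where $\tau=(L-4)/2$ is the vertex-cover number of that path; the vector in \eqref{eq:optimal_rank_vector} attains this bound and is achievable by Theorem~\ref{thm:upper_bound}. Uniqueness then follows from parity: since $L$ is even the path has an \emph{odd} number of vertices, so its minimum vertex cover is unique, which forces the $3$'s to sit exactly at the positions $r_3,r_5,\dots,r_{L-3}$ prescribed in \eqref{eq:optimal_rank_vector} and all remaining entries to equal $2$.
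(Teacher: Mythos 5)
Your proposal is correct, and it shares the paper's foundation --- Lemma~\ref{L:rank} together with the three-sector block structure \eqref{eq:unfolding_max_bonddim} of the unfoldings, which is precisely your additive formula $r_j=\rk M_j+\mathds{1}[A_j\neq 0]+\mathds{1}[B_j\neq 0]$ --- but the core mechanism is genuinely different. The paper works with $\rk\gamma_\Psi$ directly: whenever the border row or column of an unfolding vanishes and the middle block $M^{11}_n$ has rank $\le 1$, it \emph{recombines orbitals} to exhibit $\Psi$ on fewer than $L$ orbitals; it then keeps two admissible cases per rank-$2$ unfolding (middle block of rank $2$ with vanishing borders, or block-diagonal $C$ with vanishing middle block) and runs a case-combination argument on pairs of cuts showing that the orbital between the two cuts drops out of $\Psi$. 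You instead translate maximal rank of $\gamma_\Psi$ into invertibility of the antisymmetric coefficient matrix $C$ and bound $\rk C$ by block row-ranks; this kills the paper's first case outright for $L\ge 6$ (it would force $\rk C\le 4$), so your decoupling lemma leaves only the block-diagonal alternative, and the adjacent-pair contradiction collapses to one line: row and column $j+1$ of $C$ vanish. Your route is shorter, treats all adjacent interior pairs uniformly (the paper displays only the even--odd pairs $(M_{2\ell},M_{2\ell+1})$, though its argument extends verbatim to the others, as its stated conclusion requires), and yields the nice by-product that $r_j=2$ at an interior cut forces $j$ to be even, since $A_j$ must then be an invertible antisymmetric $j\times j$ matrix. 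Your uniqueness step --- the minimum vertex cover of a path on an odd number of vertices is unique --- is exactly the paper's parity count made explicit. One further point in your favour: you prove the second bullet for $j\in\{2,\ldots,L-3\}$, i.e.\ for pairs of \emph{interior} bonds, and flag that both cuts must be interior; this is the correct reading, since the theorem's literal range $j\in\{2,\ldots,L-2\}$ would include the pair $(r_{L-2},r_{L-1})=(2,2)$ realized by the optimal vector \eqref{eq:optimal_rank_vector} itself, and the paper's proof likewise only establishes the claim for consecutive matrices among $M_2,\ldots,M_{L-2}$.
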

As will become clear in the proof of Theorem \ref{thm:upper_bound}, the optimal representation is achieved for a basis consisting of natural orbitals, i.e. eigenstates of $\gamma_{\Psi}$. 

These results have an important implication for the QC-DMRG method for computing the electronic structure of molecules. 

\subsection{QC-DMRG method} This method approximates, for a given $N$-electron system, a given self-adjoint and particle-number-conserving Hamiltonian $H \, : \, \F_L\to\F_L$, and a given $L$-dimensional one-particle Hilbert space $\H_L$, the ground and excited energy levels and eigenstates of the system as follows: for a given basis $\{\varphi_1,...,\varphi_L\}$ of $\H_L$, 
\begin{equation}
\label{eq:qc_dmrg_enrgies}
   E_0^{\rm QC-DMRG}(\varphi_1,...,\varphi_L) = \min\limits_{\substack{\Psi\in\MPS(L,\{r_i\}_i,\{\varphi_i\}_i) \\ \Psi\neq 0, \, \N\Psi = N\Psi}} 
   \frac{\langle \Psi, H \Psi \rangle}{\langle \Psi,  \Psi \rangle}
\end{equation}
and 
\begin{equation}
\label{eq:qc_dmrg_enrgies_excited}
   E_j^{\rm QC-DMRG}(\varphi_1,...,\varphi_L) = \min\limits_{\substack{\Psi\in\MPS(L,\{r_i\}_i,\{\varphi_i\}_i) \\ \Psi\neq 0, \, \N\Psi = N\Psi, \\ \langle\Psi_k^{\rm QC-DMRG},\Psi\rangle=0\, \forall k=0,\ldots,j-1}} 
   \frac{\langle \Psi, H \Psi \rangle}{\langle \Psi,  \Psi \rangle} \;\;\; (j\ge 1),
\end{equation}
with the $\Psi_j^{\rm QC-DMRG}$ being corresponding optimizers. Our notation emphasizes that these quantities depend on the chosen single-particle basis. The exact (full configuration-interaction or FCI) eigenvalues $E_j$ and eigenstates $\Psi_j$ in the finite one-body basis are given, thanks to the Rayleigh-Ritz variational principle, by the analogous formulae with the MPS set $\MPS(L,\{r_i\}_i,\{\varphi_i\}_i)$ replaced by the full Fock space $\F_L$. 

\subsection{Mode transformations}
In recent simulations \cite{krumnow-legeza2016} it has been found to be beneficial to also optimize over the underlying one-body basis, i.e.~the `modes' $\varphi_1,...,\varphi_L$. Mathematically this corresponds to the following improved approximation to the eigenvalues and eigenstates:
\begin{align}\label{eq:mode_optimized}
E_0^{\rm QC-DMRG-MO}
=
\min\limits_{\substack{(\varphi_1,...,\varphi_L)\in \H_L\times\ldots\times \H_L \, : \\ \langle \phi_i,\phi_j\rangle = \delta_{ij} \, \forall i, \, j}}
E_0^{\rm QC-DMRG}(\varphi_1,...,\varphi_L)
\end{align}
and 
\begin{align}\label{eq:mode_optimized_excited}
 E_j^{\rm QC-DMRG-MO}
 =
 \min \limits_{\substack{(\varphi_1,...,\varphi_L)\in\H_L\times\ldots\times\H_L \, : \\ \langle \phi_i,\phi_j\rangle = \delta_{ij} \, \forall i, \, j}}
 E_j^{\rm QC-DMRG}(\varphi_1,...,\varphi_L) \;\;\; (j\ge 1),
\end{align}
where the superscript MO stands for mode-optimized. Corresponding optimizers in \eqref{eq:qc_dmrg_enrgies}, \eqref{eq:qc_dmrg_enrgies_excited} with optimal $\varphi_i$'s are denoted $\Psi_0^{\rm QC-DMRG-MO}$ respectively $\Psi_j^{\rm QC-DMRG-MO}$. Note that  such optimizers exist, since sets of normalized MPS states with given bond dimensions are closed \cite{holtz2012manifolds, barthel2021closedness} and bounded, and hence compact. 

Obviously, we have the inequalities
$$
    E_j \le E_j^{\rm QC-DMRG-MO}\le E_j^{\rm QC-DMRG}(\varphi_1,...,\varphi_N) \;\;\; \forall j\ge 0.
$$
Note also that for bond dimension $1$, i.e.~$r_i=1$ for all $i$, the QC-DMRG-MO ground state energy reduces precisely to the famous Hartree-Fock energy defined by
$$
   E_0^{HF} = \min\limits_{\substack{(\phi_1,\ldots, \phi_N) \in \H_L \times \ldots \times \H_L \\ \langle  \phi_{i}, \phi_j \rangle = \delta_{ij}}} 
   \frac{\langle \varphi_1 \wedge \ldots \wedge \varphi_N , H\varphi_1 \wedge \ldots \wedge \varphi_N\rangle}{\langle\varphi_1 \wedge \ldots \wedge \varphi_N, \,\varphi_1 \wedge \ldots \wedge \varphi_N\rangle},
$$
that is to say 
$$
  E_0^{\rm QC-DMRG-MO}\Big|_{r_1=...=r_{L-1}=1} = E_0^{HF}.
$$

The following interesting result is an immediate consequence of Theorem \ref{thm:upper_bound}.

\begin{theorem}[Low-rank exactness of QC-DMRG]\label{thm:exact}
For $N=2$ electrons, any particle-number-conserving self-adjoint Hamiltonian $H$, and any finite-dimensional single-particle Hilbert space ${\H_L}$, the QC-DMRG method with fermionic mode optimization is exact for bond dimension Three. That is to say,
\[
 E_j^{\rm QC-DMRG-MO}\Big|_{r_1=...=r_{L-1}=3} = E_j, \quad \forall j \geq 0,
\]
and any corresponding optimizers $\Psi_j^{\rm QC-DMRG-MO}$ are exact eigenstates.
\end{theorem}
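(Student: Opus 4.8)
The plan is to match the two inequalities already recorded, $E_j \le E_j^{\rm QC-DMRG-MO} \le E_j^{\rm QC-DMRG}(\varphi_1,\dots,\varphi_L)$, with a reverse bound supplied by Theorem \ref{thm:upper_bound}. Because $H$ conserves particle number, its eigenstates in the two-particle sector are exactly the FCI eigenstates $\Psi_0,\Psi_1,\dots$ of $H|_{\mathcal{V}_{2,L}}$, ordered by $E_0\le E_1\le\cdots$, and each lies in $\mathcal{V}_{2,L}$. Since the lower bound $E_j\le E_j^{\rm QC-DMRG-MO}$ is in hand, it suffices to prove $E_j^{\rm QC-DMRG-MO}\le E_j$ at uniform bond dimension three and to identify the optimizers with exact eigenstates. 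First I would note that a state representable with the bond dimensions \eqref{eq:optimal_rank_vector} is a fortiori representable at uniform bond dimension three, since $\MPS(L,\{r_i\}_i,\{\varphi_i\}_i)\subseteq\MPS(L,\{3\}_i,\{\varphi_i\}_i)$ whenever $r_i\le 3$ (and for $L\le 3$ the bound is trivial); thus Theorem \ref{thm:upper_bound} places every two-particle state into a uniform-bond-dimension-three MPS set for a suitable basis.

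For $j=0$ the argument is immediate. By Theorem \ref{thm:upper_bound} the exact ground state $\Psi_0$ is a uniform-bond-dimension-three MPS in its natural-orbital basis $\{\varphi_i^{(0)}\}$, the eigenbasis of $\gamma_{\Psi_0}$. Using this basis as a competitor in \eqref{eq:mode_optimized} gives $E_0^{\rm QC-DMRG-MO}\le \langle\Psi_0,H\Psi_0\rangle/\langle\Psi_0,\Psi_0\rangle = E_0$, so with the reverse inequality $E_0^{\rm QC-DMRG-MO}=E_0$. Any optimizer then lies in $\mathcal{V}_{2,L}$ and attains the minimal Rayleigh quotient there, hence is an exact ground state.

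For $j\ge 1$ I would proceed by induction, the hypothesis being $E_k^{\rm QC-DMRG-MO}=E_k$ with an exact-eigenstate optimizer for every $k<j$. In the inductive step I choose the natural-orbital basis of $\Psi_j$, in which $\Psi_j$ is a uniform-bond-dimension-three MPS. Since the exact eigenstates are mutually orthogonal, $\Psi_j\perp\Psi_0,\dots,\Psi_{j-1}$, so $\Psi_j$ is an admissible competitor for the $j$-th constrained minimization, giving $E_j^{\rm QC-DMRG-MO}\le \langle\Psi_j,H\Psi_j\rangle/\langle\Psi_j,\Psi_j\rangle=E_j$ and hence equality; any optimizer attains $E_j$ on the relevant constrained subset of $\mathcal{V}_{2,L}$ and is therefore an exact $j$-th eigenstate.

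The delicate point --- and the step I expect to demand the most care --- is the orthogonality bookkeeping in the excited-state definitions \eqref{eq:qc_dmrg_enrgies_excited}--\eqref{eq:mode_optimized_excited}, where the $j$-th MPS must be orthogonal to the same-basis optimizers $\Psi_k^{\rm QC-DMRG}(\varphi)$ rather than to the exact eigenstates. The basis making $\Psi_j$ bond-dimension three (its own natural orbitals) generally fails to make $\Psi_0,\dots,\Psi_{j-1}$ bond-dimension three, so a priori the lower same-basis optimizers do not coincide with $\Psi_0,\dots,\Psi_{j-1}$. I would close this gap by reading the nested constraints through the induction: the mode optimization has already pinned the lower levels to the exact eigenstates, and I would argue that the $j$-th mode-optimized minimization may therefore be constrained against these exact states, for which $\Psi_j$ is admissible; verifying that this is consistent with \eqref{eq:qc_dmrg_enrgies_excited}--\eqref{eq:mode_optimized_excited} is the heart of the matter. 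In the presence of spectral degeneracy I would phrase everything in terms of the eigenspaces $\ker(H|_{\mathcal{V}_{2,L}}-E_k)$ and check that the competitor $\Psi_j$ can be taken orthogonal to the span of all lower optimizers.
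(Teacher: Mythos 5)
Your proposal is correct and takes essentially the same route as the paper, which offers no separate proof but presents Theorem \ref{thm:exact} as an ``immediate consequence'' of Theorem \ref{thm:upper_bound} via precisely this competitor argument: each exact eigenstate, being a bond-dimension-three MPS in its own natural-orbital basis by Theorem \ref{thm:upper_bound}, is admissible in the mode-optimized minimization, and the variational inequalities then force equality level by level. Your inductive bookkeeping of the excited-state orthogonality constraints in \eqref{eq:qc_dmrg_enrgies_excited}--\eqref{eq:mode_optimized_excited}, including the degenerate case, supplies exactly the details the paper leaves implicit and matches its intended reading.
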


\subsection{Necessity of mode optimization}
Mode optimization is essential for Theorems \ref{thm:upper_bound} and  \ref{thm:exact}, as the following example demonstrates.

\begin{example}

Consider an arbitrary fixed underlying basis $\{\phi_i\}_{i=1}^L$. As recalled in Lemma \ref{L:rank} below, the  minimal bond dimensions of a state $\Psi$ correspond to the ranks of the unfolding of its coefficient tensor $C$. 
These take the form 
\begin{align} \label{eq:unfolding_max_bonddim}
C^{\mu_{1}, \ldots,\mu_k}_{\mu_{k+1},\ldots, \mu_L}
=
    \left(
    \begin{array}{c|c|c} 
      && v_{1,k} \\ 
      \hline 
     & D_k & \\
      \hline
      v_{2,k} &&
    \end{array} 
    \right),
\end{align}
where $D_k \in \C^{k\times L-k}$ contains the coefficients corresponding to one $\mu_i =1$ in the $k$ upper indices and one $\mu_j =1 $ in the lower $L-k$ indices, analogously for $v_{1,k} \in \C^{1\times \binom{k}{2}}$ and $ v_{2,k} \in \C^{ \binom{L-k}{2} \times 1}$.

Thus a generic state $\Psi = \sum C_{\mu_1 \ldots \mu_L} \Phi_{\mu_1 \ldots \mu_L} \in \V_{2,L}$, resulting e.g.~from its coefficients being drawn independently from a continuous probability distribution -- like a standard Gaussian -- will have minimal bond dimensions 
$$
     r_k= 2+ \min\{k,L-k\}, 
$$ 
see \cite{feng2007rank}.
\end{example}
In this example, the overall bond dimension necessary, $\max_k r_k=2\! +\!\tfrac{L}{2}$, grows with the number of orbitals $L$. Note also that the state $\Psi$ above arises as the ground state of the parent Hamiltonian given by minus the orthogonal projector onto the state, that is, $H = - |\Psi\rangle \langle\Psi|$. Further, it follows from the results in \cite{graswald2021electronic} that there always exist states in ${\cal V}_{2,L}$ for which the overall bond dimension $2 \! + \! \tfrac{L}{2}$ cannot be reduced by re-ordering the basis.

\subsection{Upper bounds on the ranks \label{sec:upper_bound}}

In this subsection we prove Theorem \ref{thm:upper_bound}. We begin by recalling the following well known result \cite{coleman2000reduced}. 
\begin{lemma}[Two-particle wave-functions]
For any two-particle wavefunction $\Psi \in \V_{2,L}=\H_L\wedge \H_L$ there exists a basis $\{\phi_i\}_{i=1}^L$ of $\H_L$ and coefficients $\lambda_i$ ($i=1,...,k$), $k\le L/2$, such that
\begin{equation}\label{eq:two_particle_rep}
   \Psi = \sum_{\ell=1}^{k} {\lambda}_{\ell} ~ |  {\phi}_{2\ell-1} ,  {\phi}_{2\ell} \rangle,
\end{equation}
i.e.~each basis function appears only in one Slater determinant.
\end{lemma}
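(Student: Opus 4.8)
The plan is to reduce the statement to a canonical form for the coefficient matrix of $\Psi$ under orthonormal changes of the single-particle basis. Fix any orthonormal basis $\{e_i\}_{i=1}^L$ of $\H_L$ and write $\Psi=\tfrac12\sum_{i,j}C_{ij}\,e_i\wedge e_j$, so that $C=(C_{ij})\in\C^{L\times L}$ is \emph{antisymmetric}, $C^T=-C$. Passing to a new orthonormal basis $\phi_a=\sum_i U_{ia}e_i$ replaces $C$ by $U^TCU$ (a congruence, \emph{not} a similarity, because the wedge pairs indices bilinearly), where $U$ ranges over all unitary matrices as the basis ranges over all orthonormal bases. Hence the lemma is equivalent to the purely linear-algebraic claim that every complex antisymmetric matrix can be brought, by unitary congruence, into block-diagonal form with $2\times2$ blocks $\bigl(\begin{smallmatrix}0&\lambda_\ell\\-\lambda_\ell&0\end{smallmatrix}\bigr)$, $\lambda_\ell\ge 0$, and zeros elsewhere; reading off the new basis then gives exactly \eqref{eq:two_particle_rep}, with the number $k$ of nonzero blocks bounded by $L/2$.

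To establish this canonical form I would peel off one block at a time, in the spirit of the singular value decomposition. If $C=0$ there is nothing to prove. Otherwise let $\overline{\phi_1}$ be a unit eigenvector of the Hermitian positive-semidefinite matrix $C^\dagger C=-\overline{C}C$ for its largest eigenvalue $\lambda_1^2>0$, and put $\lambda_1>0$ and $\phi_2:=C\overline{\phi_1}/\lambda_1$. Two short computations then do all the work: the antisymmetry of $C$ gives $\langle\phi_1,C\overline{\phi_1}\rangle=0$, so $\{\phi_1,\phi_2\}$ is orthonormal; and for every $w\perp\{\phi_1,\phi_2\}$ the two cross terms $\langle\phi_2,C\overline{w}\rangle$ and $\langle\phi_1,C\overline{w}\rangle$ vanish — the first because the Hermitian matrix $C^\dagger C$ preserves the orthogonal complement of $\overline{\phi_1}$, the second again by antisymmetry (it equals $-\lambda_1\langle w,\phi_2\rangle$). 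Hence $\Psi-\lambda_1\,|\phi_1\phi_2\rangle$ lies in $\bigwedge^2 W$ with $W:=\{\phi_1,\phi_2\}^\perp$, and the restriction of $C$ to $W$ is once more antisymmetric. Iterating on the $(L-2)$-dimensional space $W$ produces orthonormal pairs and coefficients with $\Psi=\sum_{\ell=1}^{k}\lambda_\ell\,|\phi_{2\ell-1}\phi_{2\ell}\rangle$; since each step removes two dimensions, $k\le L/2$, and completing the $\phi_{2\ell}$ by any orthonormal basis of the leftover space yields the desired basis of $\H_L$.

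The one genuine subtlety — and the reason this is not merely the spectral theorem — is the clash between the \emph{bilinear} pairing coming from the wedge product and the \emph{sesquilinear} inner product of $\H_L$: the coefficient matrix transforms by congruence $C\mapsto U^TCU$ rather than by similarity, so one cannot simply diagonalize it. This is exactly why the correct partner of $\phi_1$ is the \emph{antilinear} image $C\overline{\phi_1}$ (which is automatically orthogonal to $\phi_1$) rather than an eigenvector of $C$, and why the relevant Hermitian object is $C^\dagger C$. At the level of the one-particle density matrix this manifests as the fact that the nonzero eigenvalues of $\gamma_\Psi$ (which is proportional to $CC^\dagger$) occur with even multiplicity, the paired natural orbitals being precisely the $\{\phi_{2\ell-1},\phi_{2\ell}\}$ above — in agreement with the remark after Theorem \ref{thm:upper_bound}. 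This pairing is the only place where $N=2$ is truly used; everything else is bookkeeping, and I expect the verification of the two cross-term identities to be the only computation requiring care.
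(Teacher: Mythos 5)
Your proof is correct and is essentially the paper's own argument in expanded form: the paper merely sketches the lemma by writing $\Psi=\sum_{i\neq j}c_{ij}|\phi_i\phi_j\rangle$ with an antisymmetric coefficient matrix and ``applying spectral theory to the coefficient matrix,'' which is precisely the unitary-congruence canonical form you establish rigorously via the spectral theorem for $C^\dagger C$ and the deflation step. Your careful handling of the bilinear-versus-sesquilinear clash (using the antilinear partner $C\overline{\phi_1}$ rather than an eigenvector of $C$) simply fills in the detail the paper leaves implicit, up to a harmless sign/phase that can be absorbed into $\phi_2$ or $\lambda_1$.
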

This can be proved by using the antisymmetry of $\Psi$ to write it in the form
\begin{equation} \label{eq:cij}
    \Psi = \sum_{1\leq i \neq  j \leq L}c_{ij} |\phi_i \phi_j \rangle
\end{equation}
with $c_{ij} = - c_{ji}$, and applying spectral theory to the coefficient matrix. We note that the orbitals appearing in \eqref{eq:two_particle_rep} are automatically  of $\Psi$, i.e.~\textit{natural orbitals} or norbs.

In the following we will always assume the basis to be chosen such that $\Psi$ is of the form \eqref{eq:two_particle_rep}.

% \begin{remark}
% Note that \textit{not} any ONB in which $\gamma_\Psi$ is diagonal is enough for this sparse representation of $\Psi$. Take e.g. $ L=4$ and $\{\phi_1,\ldots,\phi_4\}$ any 4 orthonormal states. Then
% \[
% \Psi = c \bigg( |\phi_1, \phi_2 \rangle -|\phi_1, \phi_3 \rangle  +|\phi_2, \phi_4 \rangle  + |\phi_3, \phi_4 \rangle \bigg),  
% \]
% has $\gamma_\Psi = c^2 \mathds{1}$, so it is in particular diagonal.
% \end{remark}

The coefficient tensor in the occupation representation then takes the following form
\begin{equation} \label{eq:Ctens}
C_{\mu_1, \ldots, \mu_L} 
=
\sum
\limits_{\ell=1}^{k}
\lambda_\ell 
\delta_{11}^\ell
\prod
\limits_{\substack{i=1\\i\not = \ell}}^{k}
\delta_{00}^i,
\end{equation}
where we introduced the short-hand notation
\[
\delta_{00}^n := \delta_{0}(\mu_{2n-1}) \delta_{0}(\mu_{2n}),
\qquad
\delta_{11}^n := \delta_{1}(\mu_{2n-1}) \delta_{1}(\mu_{2n}).
\]
Due to this special structure
it makes sense to first seek a pair states decomposition, i.e.~an MPS factorization of $C_{\mu_1, \ldots, \mu_L}$ into tensors $B_\ell$ associated with pairs $(\mu_{2\ell-1},\mu_{2\ell})$ of occupation numbers, i.e.~$B_{\ell}\in \C^{r_{2\ell-2} \times 4 \times r_{2\ell}} $.

With respect to pair states, \eqref{eq:Ctens} looks like a non-translation-invariant version of the W-state $|\uparrow \downarrow \downarrow  ... \downarrow \rangle + |\downarrow \uparrow\downarrow ... \downarrow \rangle + ... + |\downarrow\downarrow\downarrow...\uparrow\rangle$ from spin physics, which is known to have bond dimension $2$. The following lemma gives a corresponding low-bond-dimension factorization in the non-translation-invariant case.

\begin{lemma}[Matrix lemma]
For any two sequences $(a_n)_{n\geq2}$ and $(b_n)_{n\geq2}$ of complex numbers, 
\begin{align}\label{eq:matrix_lemma}
    \begin{pmatrix}
    a_2 & b_2 \\
    &a_2
    \end{pmatrix}
       \begin{pmatrix}
    a_3 & b_3 \\
    &a_3
    \end{pmatrix}
    \cdots
       \begin{pmatrix}
    a_{n-1} & b_{n-1} \\
    &a_{n-1}
    \end{pmatrix}
    =
       \begin{pmatrix}
    \prod \limits_{i=2}^{n-1} a_i &  \sum \limits _{i=2}^{n-1}b_i\prod \limits_{j \neq i} a_j \\
    & \prod \limits_{i=2}^{n-1} a_i
    \end{pmatrix}.
\end{align}
\end{lemma}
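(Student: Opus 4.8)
The plan is to recognize each factor as a scalar plus a nilpotent matrix and to exploit the fact that the nilpotent squares to zero. Setting $N := \left(\begin{smallmatrix}0&1\\0&0\end{smallmatrix}\right)$, each factor is $a_i\,\Id + b_i N$, and since $N^2 = 0$ one checks immediately that
\begin{equation*}
(a_i \Id + b_i N)(a_j \Id + b_j N) = a_i a_j\, \Id + (a_i b_j + a_j b_i) N,
\end{equation*}
which is symmetric in $i$ and $j$; hence all the factors commute.

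Given commutativity and $N^2=0$, I would expand the full product by distributivity as a sum over subsets $S \subseteq \{2,\ldots,n-1\}$,
\begin{equation*}
\prod_{i=2}^{n-1}(a_i \Id + b_i N) = \sum_{S} \Big(\prod_{i\notin S} a_i\Big)\Big(\prod_{i\in S} b_i\Big) N^{|S|},
\end{equation*}
and observe that every term with $|S|\ge 2$ vanishes because it contains a factor $N^{|S|}=0$. Only $|S|=0$ and $|S|=1$ survive: the empty set contributes $\big(\prod_{i=2}^{n-1} a_i\big)\Id$, and each singleton $S=\{i\}$ contributes $b_i \big(\prod_{j\neq i} a_j\big) N$. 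Summing these reproduces exactly the right-hand side of \eqref{eq:matrix_lemma}, with the common diagonal entry $\prod_{i=2}^{n-1} a_i$ coming from $\Id$ and the upper-right entry $\sum_{i=2}^{n-1} b_i \prod_{j\neq i} a_j$ coming from $N$.

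An equivalent, fully elementary route is induction on $n$: the base case $n=3$ is the single matrix itself, and the inductive step is one $2\times 2$ multiplication of the claimed form for the product up to index $n-1$ by $\left(\begin{smallmatrix}a_n & b_n\\ & a_n\end{smallmatrix}\right)$, after which the diagonal entry becomes $a_n\prod_{i=2}^{n-1}a_i$ and the off-diagonal becomes $a_n\sum_{i}b_i\prod_{j\neq i}a_j + b_n\prod_{i=2}^{n-1}a_i$, matching the asserted formula at index $n$.

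I do not expect a genuine obstacle here: the content of the lemma is the structural observation that the factors lie in the commutative algebra generated by a single square-zero nilpotent, which makes the product collapse to its zeroth- and first-order parts. The only point requiring care is the index bookkeeping in the off-diagonal sum, namely keeping track that the product $\prod_{j\neq i} a_j$ omits exactly the index $i$ over the range $\{2,\ldots,n-1\}$.
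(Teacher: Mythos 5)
Your proposal is correct and coincides with the paper's own argument: the paper likewise writes each factor as $a_i\,\Id + b_i S$ with $S$ the square-zero nilpotent $\bigl(\begin{smallmatrix}0&1\\0&0\end{smallmatrix}\bigr)$ and collapses the product to $\prod_i a_i \,\Id + \sum_i b_i \prod_{j\neq i} a_j\, S$. Your subset expansion and the alternative induction are just slightly more detailed bookkeeping of the same idea.
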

\begin{proof}
We can write the left hand side of \eqref{eq:matrix_lemma} as 
\begin{align*}
    \left[
    a_2 \Id + b_2 \underbrace{\begin{pmatrix}
    0 & 1 \\  0 & 0
    \end{pmatrix}
    }_{=:S}
    \right]
       \left[
    a_3 \Id + b_3 \begin{pmatrix}
    0 & 1 \\  0 & 0
    \end{pmatrix}
    \right]
    \cdots    \left[
    a_{n-1} \Id + b_{n-1}  \begin{pmatrix}
    0 & 1 \\  0 & 0
    \end{pmatrix}
    \right].
\end{align*}
By the nilpotence of the matrix $S$, this becomes
\[
 \prod \limits_{i=2}^{n-1} a_i \Id + \sum \limits _{i=2}^{n-1}b_i\prod \limits_{j \neq i} a_j ~S,\]
 which is our assertion.
\end{proof}
Consequently, letting
\begin{eqnarray}
%first two
B_1[\mu_1, \mu_2] & := & \begin{pmatrix}
\delta_{00}^1  & ~\lambda_{1}\,\delta_{11}^1
\end{pmatrix}, \nonumber \\[2.5mm]
% first block 
B_{\ell}[\mu_{2\ell-1}, \mu_{2\ell}] 
&:= &
\begin{pmatrix}
\delta_{00}^{2\ell}  &   \lambda_\ell ~\delta_{11}^{2\ell} \\
 & \delta_{00}^{2\ell} 
\end{pmatrix}  \;\; (1 < \ell < k), \label{eq:B} \\[1mm]
B_{k}[\mu_{2k-1}, \mu_{2k}]&:= &
\begin{pmatrix}
\lambda_k ~\delta_{11}^{2k} 
 \\
~~~\delta_{00}^{2k} 
\end{pmatrix} \nonumber
\end{eqnarray}
\noindent
we obtain the following MPS representation: 
\begin{align*}
C_{\mu_1,\ldots,\mu_L} &= B_1[\mu_1,\mu_2]  \ldots B_k[\mu_{2k-1},\mu_{2k}] \\
&=
    \begin{pmatrix}
\delta_{00}^1 & \lambda_{1} \delta_{11}^1
\end{pmatrix}
\begin{pmatrix}
\prod\limits_{\ell=2}^{k-1} \delta_{00}^\ell &  
\sum\limits_{\ell=2}^{k-1}\lambda_\ell \delta_{11}^\ell \prod\limits_{\substack{i=2\\i\not = \ell}}^{k-1}\delta_{00}^i
\\
& \prod\limits_{\ell=2}^{k-1} \delta_{00}^\ell
\end{pmatrix}
\begin{pmatrix}
\lambda_k \delta_{11}^k\\
\delta_{00}^k 
\end{pmatrix}\\
&=
% \begin{pmatrix}
%\delta_{00}^1 & \lambda_{1} \delta_{11}^1
%\end{pmatrix}
\begin{pmatrix}
\prod\limits_{\ell=1}^{k-1} \delta_{00}^\ell \;  & \; 
\sum\limits_{\ell=1}^{k-1}\lambda_\ell \delta_{11}^\ell \prod\limits_{\substack{i=1\\i\not = \ell}}^{k-1}\delta_{00}^i
\end{pmatrix}
\begin{pmatrix}
\lambda_k \delta_{11}^k\\
\delta_{00}^k 
\end{pmatrix}
% end textcolor
\\
&=
\sum\limits_{\ell=1}^{k}\lambda_\ell \delta_{11}^\ell \prod\limits_{\substack{i=1\\i\not = \ell}}^{k}\delta_{00}^i.
\end{align*}
\begin{figure}[ht!]
    \centering
    \includegraphics[width = 0.75\textwidth]{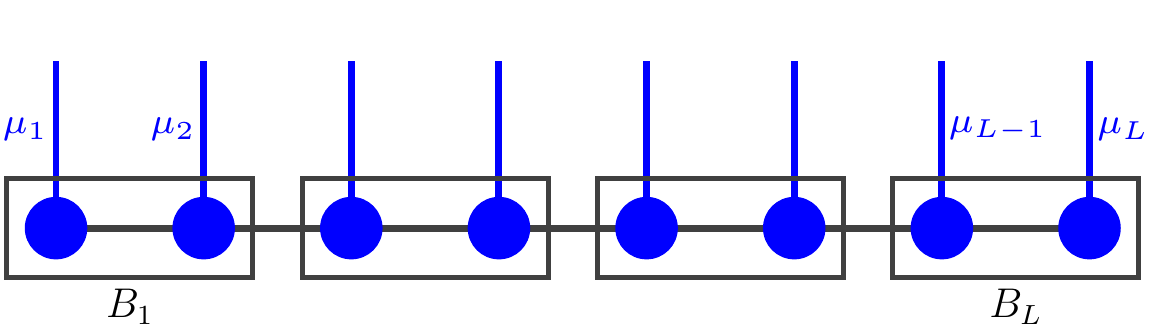}
    \caption{Graphical representation of the MPS decomposition associated with orbital pairs.}
    \label{fig:pair_states}
\end{figure}

The last step consists now in passing from the pair states to the original states, i.e.~decomposing the tensors $B_\ell[\mu_{2\ell-1},\mu_{2\ell}]$ into two tensors depending only on one of the $\mu_i$'s. 
This can be either guessed directly or obtained via reshaping and carrying out a singular value decomposition as in the derivation of the MPS representation of a general state (see e.g. \cite{schollwock2011density}). 

The result is 

\resizebox{1.15\linewidth}{!}{
  \begin{minipage}{\linewidth}
\begin{align*}
%first two
A_1[\mu_1] &:= \begin{pmatrix}
\delta_{0}(\mu_1) & \delta_{1}(\mu_1)
\end{pmatrix},
&
A_2[\mu_2] &:= \begin{pmatrix}
\delta_{0}(\mu_2) & \\
&
\lambda_{1} \delta_{1}(\mu_2)
\end{pmatrix},\\
% first block 
A_{2\ell-1}[\mu_{2\ell-1}] 
&:=
\begin{pmatrix}
\delta_{0}(\mu_{2\ell-1}) & \lambda_\ell \delta_{ 1}(\mu_{2\ell-1}) & \\
& & \delta_{0}(\mu_{2\ell-1})
\end{pmatrix},
&
A_{2\ell}[\mu_{2\ell}] &:=\begin{pmatrix}
\delta_{0}(\mu_{2\ell}) & \\
& \delta_{1}(\mu_{2\ell}) \\
& \delta_{0}(\mu_{2\ell})
\end{pmatrix} \;\; (1<\ell<k), \\
% middle
% A_{2k-3}[\mu_{2k-3}] 
% &:=
% \begin{pmatrix}
% \lambda_{(k-1)} \delta_{\mu_{2k-3},1} & \delta_{\mu_{2k-3},0} & \\
% & & \delta_{\mu_{2k-3},0}
% \end{pmatrix},
% &
% A_{2k-2}[\mu_{2k-2)}] &:=\begin{pmatrix}
% \delta_{\mu_{2k-2},1} & \\
% & \delta_{\mu_{2k-2},0} \\
%  \delta_{\mu_{2k-2},0} &
% \end{pmatrix},\\
%last term
A_{2k-1}[\mu_{2k-1}]&:=
\begin{pmatrix}
\lambda_{k}\delta_{1}(\mu_{2k-1}) &\\
&  \delta_{0}(\mu_{2k-1})
\end{pmatrix},
&
A_{2k}[\mu_{2k}]&:=
\begin{pmatrix}
\delta_{1}(\mu_{2k}) \\ \delta_{0}(\mu_{2k})
\end{pmatrix}.
\end{align*}
  \end{minipage}
}
\vspace*{1mm}
%%%

Therefore we have found an MPS representation for $\Psi$ with bond dimensions $r = (2,2,3,2,\ldots,2,3,2,2)$.
Note that in the case $L=4$, i.e.~$k=2$, this reduces to $r=(2,2,2)$.
Finally, in the case of just one Slater-determinant, i.e.~$L=2$, we can use 
$a_1[\mu_1]:= \lambda_1 \delta_{1}(\mu_1)$, $a_2[\mu_{2}]:= \delta_{1}(\mu_2)$. 
This completes the proof of Theorem \ref{thm:upper_bound}.

\subsection{Lower bounds on the ranks}
In the previous subsection we saw that we can choose a basis such that all states can be represented with bond dimensions $r= (2,\underbrace{2,3,\ldots,2,3}_{L-4 \text{ times}},2,2)$. 
But as the product of the matrices of two orbitals can be written as a $2\times2$-matrix (see the $B_\ell$ above)
one might wonder if the maximal bond dimension can be brought down to 2.
This turns out not to be the case and the above size vector is optimal as stated in Theorem \ref{thm:optimality}.

Our starting point to prove Theorem 2 is the following well known fact.

%theorem 12.2 in hackbusch
% Theorem 3.1 holtz schneider
\begin{lemma}[TT-rank equals separation rank \cite{holtz2012manifolds}, \cite{hackbusch2012tensor}
]\label{L:rank}
Let $C \in \C^{n_1 \times\ldots \times n_d}$ be an arbitrary tensor (representing the coefficients of a quantum state with respect to a fixed basis). 
For each bond between the $i$th and the $i+1$st matrix, there exists a minimial $r_i$ such that $C$ admits a TT-decomposition with $A_i$ of size $n\times r_i$ and $A_{i+1}$ of size $r_i\times m$, and this $r_i$ is given by the Schmidt rank of the unfolding $C^{\mu_1..\mu_{i}}_{\mu_{i+1}...\mu_L}$. Also, there exists a TT-decomposition with all $r_i$ being simultaneously minimal.
%If $s= s(C)$ denotes the (unique) seperation rank of $C$, then there holds
%\[
%r=s.
%\]
\end{lemma}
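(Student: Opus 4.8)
The plan is to establish the formula $r_i = s_i$, where $s_i$ denotes the rank of the $i$-th unfolding matrix $C^{\mu_1\ldots\mu_i}_{\mu_{i+1}\ldots\mu_d}$ (its Schmidt rank), by proving two matching bounds: that \emph{every} TT-decomposition satisfies $r_i \ge s_i$ at each bond, and that \emph{some} decomposition attains $r_i = s_i$ for all $i$ simultaneously. Squeezing the minimal attainable bond dimension between these two bounds yields both the rank formula and the simultaneous-minimality claim.

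For the lower bound, fix any TT-decomposition \eqref{eq:MPS} and any bond $i$. Grouping the matrix products to the left and to the right of that bond, the unfolding factorizes as
\begin{equation*}
  C^{\mu_1\ldots\mu_i}_{\mu_{i+1}\ldots\mu_d}
  =
  \sum_{\alpha_i=1}^{r_i}
  \bigl(A_1[\mu_1]\cdots A_i[\mu_i]\bigr)_{\alpha_i}\,
  \bigl(A_{i+1}[\mu_{i+1}]\cdots A_d[\mu_d]\bigr)_{\alpha_i},
\end{equation*}
i.e.\ as a product of a matrix with $r_i$ columns and a matrix with $r_i$ rows. Its rank is therefore at most $r_i$, giving $s_i \le r_i$.

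For the matching upper bound I would run the standard left-to-right SVD sweep. Reshape $C$ into the matrix $C^{\mu_1}_{\mu_2\ldots\mu_d}$ of size $n_1\times(n_2\cdots n_d)$ and take its singular value decomposition $U^{(1)}\Sigma^{(1)}(V^{(1)})^*$, with $U^{(1)}$ carrying $s_1$ orthonormal columns; read off $A_1[\mu_1]$ from the rows of $U^{(1)}$ and carry the remainder $\Sigma^{(1)}(V^{(1)})^*$ forward, reshaped so that the new bond index of size $s_1$ is merged with $\mu_2$. Iterating — at step $i$ one SVDs the current remainder regarded as a matrix $M_i$ of size $(s_{i-1}n_i)\times(n_{i+1}\cdots n_d)$, sets $A_i[\mu_i]$ from the left singular vectors, and passes $\Sigma (V)^*$ to the next site — produces tensors $A_i[\mu_i]$ of the required shapes and terminates with a genuine TT-decomposition of $C$.

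The crux, and the step I expect to cost the most care, is verifying that the rank extracted at step $i$ equals the Schmidt rank $s_i$ of the \emph{original} unfolding, so that all bonds come out minimal at once. The key structural fact is that after the first $i-1$ steps the accumulated left factor $Q_{i-1}$ — the suitably reshaped product of $U^{(1)},\ldots,U^{(i-1)}$ — has orthonormal columns (the left-canonical gauge, expressing $\sum_{\mu}A_j[\mu]^*A_j[\mu]=\Id$ at each processed site), and the matrix $M_i$ being decomposed at step $i$ satisfies $C^{\mu_1\ldots\mu_i}_{\mu_{i+1}\ldots\mu_d}=(Q_{i-1}\otimes\Id)\,M_i$. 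Since $Q_{i-1}\otimes\Id$ again has orthonormal columns, hence full column rank, we get $\rk M_i=\rk C^{\mu_1\ldots\mu_i}_{\mu_{i+1}\ldots\mu_d}=s_i$, so the SVD at step $i$ opens a bond of dimension exactly $s_i$. Combined with the lower bound $r_i\ge s_i$, this shows the constructed decomposition is simultaneously minimal, completing the proof.
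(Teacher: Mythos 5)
Your proof is correct. The paper itself offers no proof of this lemma --- it is quoted as a known fact with citations to Holtz--Rohwedder--Schneider and Hackbusch --- and your two-sided argument (the bound $s_i \le r_i$ obtained by factoring any given decomposition across bond $i$, plus the left-to-right SVD sweep in which the accumulated left factor $Q_{i-1}$ is an isometry, so that $\rk M_i = s_i$ and all bonds come out minimal simultaneously) is precisely the standard proof found in those cited references.
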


\begin{proof}[Proof of Theorem \ref{thm:optimality}]
We start with the case $L=4$ to convey the proof idea.
Due to Lemma \ref{L:rank} it is enough to consider the unfoldings $C^{\mu_1}_{\mu_1\mu_2\mu_3}$ and $C^{\mu_1\mu_2}_{\mu_3\mu_4}$ of our tensor $C_{\mu_1 \mu_2 \mu_3 \mu_4}$.
Note that here the coefficients $c_{ij}$ in \eqref{eq:cij} are with respect to some arbitrary underlying basis $\{\phi_{i}\}_{i=1}^L$.
We begin with 
\begin{align*}
    C^{\mu_1}_{\mu_2,\mu_3,\mu_4}
    =
    \kbordermatrix{
    & 110 & 101 & 101 & 100 & 010 & 001 \\
    0 & c_{23} &c_{24} &c_{34} & & &\\
    1 & & & & c_{12} & c_{13} & c_{14}    
    }
\end{align*}
The second row cannot vanish since then the state $\phi_1$ would not appear at all, meaning $\gamma_\Psi$ has rank $<4$, a contradiction.\\
And if the first row vanishes we can define 
$
\tilde \phi_{2} := \sum\limits_{k=2}^4 c_{1,k}\phi_k
$ and thus get 
\[
\Psi = | \phi_1,  \tilde \phi_2 \rangle,
\]
so again $\mbox{rank}\,\gamma_\Psi<4$.
So the matrix $C^{\mu_1}_{\mu_2,\mu_3,\mu_4}$ must have rank $2$.

Next let us consider
\begin{align*}
    C^{\mu_1, \mu_2}_{\mu_3,\mu_4}
    =
    \kbordermatrix{
     & 00 & 10 & 01 & 11 \\
    00 &  & &&c_{34} \\
    10 & & c_{13} & c_{14}  &\\
    01 & &  c_{23} & c_{24} &\\
    11 &c_{12} & & & 
    }.
\end{align*}
We want to show that  $\rk  C^{\mu_1, \mu_2}_{\mu_3,\mu_4} \geq 2$.

If the submatrix in the middle vanishes, then both $c_{12} \not = 0$ and $c_{34} \not = 0$, otherwise $L \leq 2$. But then $\rk  C^{\mu_1, \mu_2}_{\mu_3,\mu_4} = 2$.
So we can assume that the submatrix in the middle does not vanish. If it has rank $2$ we are already done. Thus assume that the rank equals $1$ and $c_{12} = c_{34} =0$. Then we know that there is a $\lambda \in \C$ such that 
\[
\begin{pmatrix}
c_{14} \\
c_{24}
\end{pmatrix}
=
\lambda
\begin{pmatrix}
c_{13} \\
c_{23}
\end{pmatrix}
\]
but then we can write $\Psi$ as
\[
\Psi 
=
 c_{13} |\phi_{1}, \phi_{3} + \lambda \phi_4 \rangle +  c_{23} |\phi_{2}, \phi_{3} + c \phi_4 \rangle, 
\]
so $L<4$.
The unfolding $   C^{\mu_1, \mu_2,\mu_3}_{\mu_4}$ is dealt with in the same way as $   C^{\mu_1
}_{\mu_2,\mu_3,\mu_4}$.
Therefore if $\rk\gamma_\Psi = L = 4$, then the lowest possible rank vector is $r= (2,2,2).$

Let us now turn to the general case $L \geq 6$. We start by noting that the first unfolding $C^{\mu_1}_{\mu_2,\ldots,\mu_L}$ and the last unfolding $C^{\mu_1,\ldots,\mu_{L-1}}_{\mu_L}$ both always have rank $2$. The argument is exactly the same as in the $L=4$ case. Thus we already know $r_1= r_{L-1}=2$.
Consider now the unfoldings
\begin{align*}
      M_{n}&:= C^{\mu_1,\ldots,\mu_{n}}_{\mu_{n+1},\ldots\mu_L}
  =
  \kbordermatrix{
 & 0\ldots0 & 10\ldots0 &\cdots & & \cdots & 0\ldots01 & 110\ldots0 & \cdots & & \cdots &0\ldots 011\\
 0\ldots 0& & & & & & & c_{n+1,n+2} & \ldots & &\ldots  &c_{L-1,L} \\
 10\ldots0 & & c_{1,n+1} & \ldots & & \ldots & c_{1L}\\
 \vdots & & \vdots & & & & \vdots \\
 \\
\vdots & & \vdots & & & & \vdots \\
 0\ldots01 & &c_{n,n+1} & \ldots & & \ldots & c_{nL} \\
 110\ldots0 & c_{12}\\
 \vdots & \vdots\\
 \\
 \vdots & \vdots \\
 0\ldots011 & c_{n-1,n}
 }.
\end{align*}

We start by proving that these matrices $\big(M_{n}\big)_{n=2,\ldots,L-2}$ always have rank $\geq 2$.

Assume the first row vanishes.
If the submatrix corresponding to one $\mu_i$ being $1$ in the upper indices and one $\mu_j$ being $1$ in the lower indices -- denoted by $M^{11}_n$ -- has rank $\geq 2$, there is nothing to show. So assume that $\rk M^{11}_n \leq 1$. 
Then we can recombine the states $\phi_{n+1},\ldots,\phi_{L}$ to see that $\rk\gamma_\Psi < L$, as follows.
Since $\rk M^{11}_n \leq 1$, all columns are multiples of a single column, i.e.~w.l.o.g.
 \begin{align*}
     \exists \alpha_{j}:
\begin{pmatrix}
c_{1j}\\
\vdots\\
c_{nj}
\end{pmatrix}
=
\alpha_j
\begin{pmatrix}
c_{1,n+1}\\
\vdots\\
c_{n,n+1}
\end{pmatrix}
\quad \forall j\in \{n+1,\ldots,L  \}.
 \end{align*}
Then we can write $\Psi$ as 
\begin{align*}
    \Psi 
    &=
    \sum_{1\leq r < s \leq n} c_{rs} |\phi_r,\phi_s\rangle
    +
    \sum_{\substack{1\leq r \leq n \\ n+1 \leq s \leq L}} c_{rs} |\phi_r,\phi_s\rangle \\
    &=
     \sum_{1\leq r < s \leq n} c_{rs} |\phi_r,\phi_s\rangle
    +
    \sum_{1\leq r \leq n }  c_{r,n+1} \big|\phi_r,\underbrace{\sum_{n+1 \leq s \leq L} \alpha_{s}\phi_s}_{\tilde \phi_{n+1}} \big\rangle.
\end{align*}
So $\Psi $ can be represented with only at most $n+1$ basis functions, i.e.~$\rk \gamma_\Psi \leq n+1 < L$, a contradiction.

In the same way, assuming that the first column vanishes and that $\rk M_n^{11} \leq 1$, we obtain w.l.o.g.
 \begin{align*}
     \exists \beta_{j}:
\begin{pmatrix}
c_{j,n+1}, &
\ldots, & 
c_{jL}
\end{pmatrix}
=
\beta_j
\begin{pmatrix}
c_{1,n+1},&
\ldots, &
c_{1L}
\end{pmatrix}
\quad \forall j\in \{1,\ldots,n \}.
 \end{align*}
Then we can write $\Psi$ as 
\begin{align*}
    \Psi 
    &=
    \sum_{n+1\leq r < s \leq L} c_{rs} |\phi_r,\phi_s\rangle
    +
    \sum_{\substack{1\leq r \leq n \\ n+1 \leq s \leq L}} c_{rs} |\phi_r,\phi_s\rangle \\
    &=
      \sum_{n+1\leq r < s \leq L} c_{rs} |\phi_r,\phi_s\rangle
    +
     \sum_{n+1 \leq s \leq L} c_{1s} \big|\underbrace{\sum_{1\leq r\leq n }\beta_r \phi_r}_{\tilde \phi_n},~\phi_s\big\rangle.
\end{align*}
Consequently $\Psi $ can be represented with at most $L+1-n$ basis functions, i.e.~$\rk \gamma_\Psi \leq L+1-n < L$, so we again obtain a contradiction.
Hence we have proven that if the first row or the first column of $M_n$ vanish, the submatrix $M_n^{11}$ has rank $\geq 2$.
Since we have dealt with $M_1$ and $M_{L-1}$ separately, we thus have shown
\[
\rk M_n \geq 2 \quad \forall n \in \{1,\ldots, L-1\}.
\]

Our next steps now consists in considering two unfoldings at the same time and prove that at least one of them has rank $\geq 3$. 

Therefore, consider for every $\ell \in \{1,\ldots, k-2 \}$ the following matrices $M_{2 \ell}$ and $M_{2\ell +1}$, where $M_{i}:= C^{\mu_1,\ldots,\mu_{i}}_{\mu_{i+1},\ldots\mu_L}$ and $L=2k$. 
\begin{align*}
  M_{2\ell}:
  %=  C^{\mu_1,\ldots,\mu_{2\ell}}_{\mu_{2\ell+1},\ldots\mu_L}
  =
 \kbordermatrix{
 & 0\ldots0 & 10\ldots0 &\cdots & & \cdots & 0\ldots01 & 110\ldots0 & \cdots & & \cdots &0\ldots 011\\
 0\ldots 0& & & & & & & c_{2\ell+1,2\ell+2} & \ldots & &\ldots  &c_{L-1,L} \\
 10\ldots0 & & c_{1,2\ell+1} & \ldots & & \ldots & c_{1L}\\
 \vdots & & \vdots & & & & \vdots \\
 \\
\vdots & & \vdots & & & & \vdots \\
 0\ldots01 & &c_{2\ell,2\ell+1} & \ldots & & \ldots & c_{2\ell,L} \\
 110\ldots0 & c_{12}\\
 \vdots & \vdots\\
 \\
 \vdots & \vdots \\
 0\ldots011 & c_{2\ell-1,2\ell}
 }
 \end{align*}
 \begin{align*}
  M_{2\ell+1}:
  %= C^{\mu_1,\ldots,\mu_{2\ell+1}}_{\mu_{2\ell+2},\ldots\mu_L}
  =
  \kbordermatrix{
 & 0\ldots0 & 10\ldots0 &\cdots & & \cdots & 0\ldots01 & 110\ldots0 & \cdots & & \cdots &0\ldots 011\\
 0\ldots 0& & & & & & & c_{2\ell+2,2l+3} & \ldots & &\ldots  &c_{L-1,L} \\
 10\ldots0 & & c_{1,2\ell+2} & \ldots & & \ldots & c_{1L}\\
 \vdots & & \vdots & & & & \vdots \\
 \\
\vdots & & \vdots & & & & \vdots \\
 0\ldots01 & &c_{2\ell+1,2\ell+2} & \ldots & & \ldots & c_{2\ell+1,L} \\
 110\ldots0 & c_{12}\\
 \vdots & \vdots\\
 \\
 \vdots & \vdots \\
 0\ldots011 & c_{2\ell,2\ell+1}
 }
\end{align*}
Note that with this range for $\ell$ we do not reach the unfolding $M_{L-2}:=C^{\mu_{1},\ldots,\mu_{L-2}}_{\mu_{L-1},\mu_{L}}$. This is to be expected since we always have $L-1$ unfoldings and the first and the last one have to be dealt with separately, so from the remaining  $L-3$ unfoldings -- which is an odd number -- one matrix will be left out. 

Assume now that both $M_{2\ell}$ and $M_{2\ell+1}$ only have rank $2$.
Above we showed that  if the first row or the first column of $M_n$ vanish the submatrix $M_n^{11}$ has rank $\geq 2$. Thus only two cases could happen for each $M_n$: either both the first row and the second row vanish and $\rk M_n^{11}=2$ (\textit{case 1}) or both the first row and the first column do not vanish and $\rk M_n^{11} = 0$ (\textit{case 2}).

Note that if for $M_{2\ell}$ \textit{case 1} occurs then clearly also the first row of $M_{2\ell+1}$ vanishes so either $\rk M_{2\ell+1} \geq 3$ or also for $M_{2\ell+1}$ \textit{case 1} happens.
Similarly if $M_{2\ell+1}$ falls into \textit{case 1}, then the first column  of $M_{2\ell}$ vanishes 
 so either $\rk M_{2\ell} \geq 3$ or again both matrices satisfy \textit{case 1}.
 
Therefore we only need to check the following two overall situations.

First, assume that for both $M_{2\ell}$ and $M_{2\ell+1}$ \textit{case 1} occurs. Then $c_{j,2\ell+1} = c_{2\ell+1,j} = 0$ for all $j$, i.e.~the state $\phi_{2\ell+1}$ does not appear in $\Psi$, so $\rk \gamma_\Psi < L$, a contradiction. To see this note that all $c_{j,2\ell+1}$ are contained in the first column of $M_{2\ell+1}$ and all $c_{2\ell+1,j}$ are contained in the first row of $M_{2\ell}$.

Second, assume that for both $M_{2\ell}$ and $M_{2\ell+1}$ \textit{case 2} occurs. Then as above we obtain $c_{j,2\ell+1} = c_{2\ell+1,j} = 0$ for all $j$, i.e.~$\rk \gamma_\Psi < L$. This time the vanishing of the coefficients stems from the fact that  all $c_{j,2\ell+1}$ are contained in the first column of $M^{11}_{2\ell}$ and all $c_{2\ell+1,j}$ are contained in the last row of $M^{11}_{2\ell + 1}$.

In conclusion we have shown that for $\big(M_{n} \big)_{n = 2}^{L-2}$ one of two consecutive  matrices must always have $\rk \geq 3$.

Since $(r_2, \ldots, r_{L-2})$ has an odd number of entries the lowest possible ranks are the ones starting with $2$ and not with $3$, i.e.
\[
(r_2, \ldots, r_{L-2}) 
=
(2,3,2,3, \ldots, 2,3,2),
\]
which yields the lowest possible rank vector
\[
(r_1,...,r_{L-1}) =  (2,\underbrace{2,3,\ldots,2,3}_{L-4 \text{ times}},2,2).
\]
The proof of Theorem \ref{thm:upper_bound}
is complete.
\end{proof}

\section{Matrix product states -- Infinite dimensions}

We now deal with infinite-dimensional single-particle Hilbert spaces $\H$. As we will see, this calls for half-infinite matrix product states which we will introduce in a rigorous manner below. 
Graphically this corresponds to a half-infinite chain, see Figure \ref{fig:tt_rep_infinite}.

So let $\H$ be an infinite-dimensional separable Hilbert space spanned by orthonormal orbitals 
 $\{\phi_i\}_{i=1}^{\infty}$, let ${\mathcal V}_N$ be the $N$-fold antisymmetric product $\bigwedge_{i=1}^N \H$, and let $\F$ be the ensuing Fock space, 
 $$
    \F := \bigoplus_{N=0}^\infty {\mathcal V}_N.
 $$
 
 Analogously to \eqref{eq:MPS}, we define a matrix product state (MPS) or tensor train (TT) with respect to the basis $\{\varphi_i\}_{i=1}^\infty$ 
with size parameters ('bond dimensions') $\{r_i\}_{i=1}^\infty$ to be a state of the form
\begin{equation} \label{eq:MPS_infinite}
   \Psi = \lim_{L\to\infty} \sum_{\mu_1,\ldots \mu_L =0}^1  \! A_1[\mu_1] A_2[\mu_2] ... A_L[\mu_L] 
   \begin{pmatrix} 0  \\ \vdots \\ 0 \\ 1 \end{pmatrix}
   ~\Phi_{\mu_1 ... \mu_L} \in \F,
\end{equation}
where the $A_i[\mu_i]$ ($i=1,2,...)$ are $r_{i-1} \times r_i$ matrices, $r_0=1$, the column vector above has length $r_L$ (so as to make the coefficient of $\Phi_{\mu_1...\mu_L}$ scalar), and the $A_i$ are such that the above limit exists as a strong limit in the Fock space $\F$. The key point about the representation \eqref{eq:MPS_infinite} is that the $A_i$ are {\it fixed} matrices which only depend on the {\it exact} infinite-dimensional quantum state $\Psi$ and encode its true entanglement structure, whereas first truncating the one-body Hilbert space to dimension $L$ and then MPS-factorizing the ensuing approximation to $\Psi$ would lead to $L$-dependent $A_i$'s.

The vector $(0,...,0,1)$ appearing in \eqref{eq:MPS_infinite} may look arbitrary at first, but as we show in a companion paper \cite{friesecke_graswald_infinite_mps} every normalized state $\Psi$ in the Fock space $\F$ can be represented in the form \eqref{eq:MPS_infinite} with left-normalized $A_i$ (i.e. $\sum_{\mu_i}A_i(\mu_i)^\dagger A_i(\mu_i)=I$) if the $r_i$ are allowed to grow exponentially (i.e. $r_i=2^i$).

The set of tensor trains (TT) or matrix product states (MPS) with respect to the basis $\{\varphi_i\}_{i=1}^\infty$ with bond dimensions $\{r_i\}_{i=1}^\infty$ is denoted by 
\begin{equation}\label{eq:mps_representation_infinite}
    \mathrm{MPS}\bigl(\infty,\{r_i\}_{i},\{\varphi_i\}_{i}\bigr) \; \; \subseteq \;\; \F.
\end{equation}

\begin{figure}[h!]
    \centering
    \hspace*{10mm} \includegraphics[width = 0.85 \textwidth]{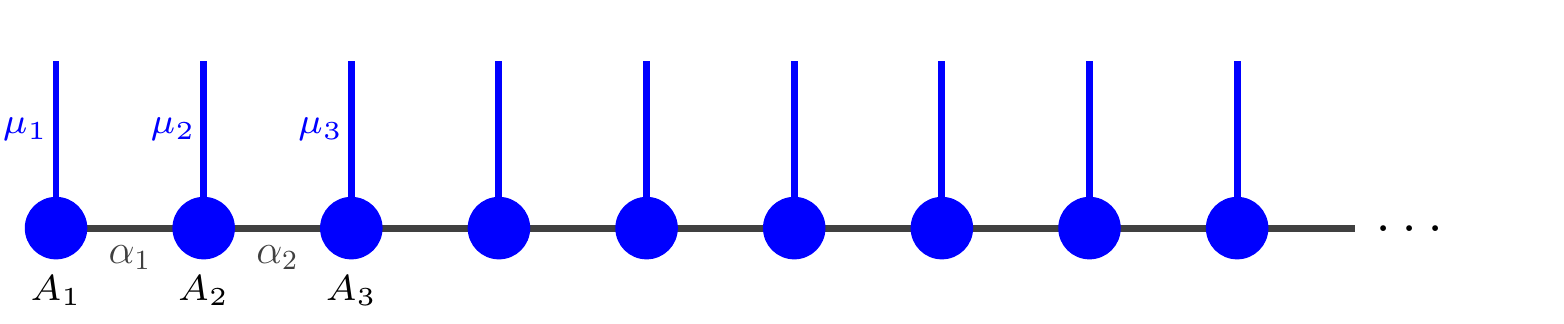}
    \caption{Graphical representation of a matrix product state in the infinite-dimensional case.}
    \label{fig:tt_rep_infinite}
\end{figure}

%In contrast to the finite dimensional case to keep the statements more compact, we here take a more abstract viewpoint. We start by defining the set of all orthonormal basis
%\begin{align}
%    \mathcal{G} &= \bigg \{ (\phi_i)_{i=1}^\infty : \langle \phi_i, \phi_j \rangle = \delta_{ij},~ \phi_i \in \H ~ \forall i  \bigg\} \\
%    \intertext{and the set of all tensors fitting the MPS framework}
%    \mathcal{A} &= \bigg \{ \big(A_i \big)_{i=1}^\infty : A_i \in \C^{r_{i-1} \times 2 \times r_{i}} \bigg \}
%\end{align}

%where the rank vector $r \in \ell^\infty(\N)$ and the bond dimension $m:= \max_i r_i$.

%Then defining the map 
%\begin{align}\label{eq:mapping}
%    \mathcal{T}: \mathcal{G}\times \A &\to \mathcal{F} \\
%    \big( (\phi_i)_i, (A_i)_i \big) &\mapsto \lim_{L \to \infty}\sum_{\mu_1,\ldots \mu_L =0}^1\begin{pmatrix}
%    1  \ldots 1
%    \end{pmatrix} A_1[\mu_1]\cdot A_2[\mu_2]\cdots A_L[\mu_L] \begin{pmatrix}
%    1 \\ \vdots \\ 1
%    \end{pmatrix} ~\Phi_{\mu_1, \ldots,\mu_L}
%\end{align}

\section{Two-particle systems -- Infinite dimensions}

We now extend our results from section \ref{sec:two_particle_finite} to infinite dimensions.
%leading to an exact reformulation of the two-electron Schr\"odinger equation. 
%For simplicity we confine ourselves here to considering representations with equal bond dimensions, $r_i=r$ for all $i$ and some constant $r$.
%
%
\begin{theorem}\label{thm:onto} 
Let $\H$ be an infinite-dimensional separable Hilbert space. For any two-particle state $\Psi\in \H\wedge\H$, there exists an orthonormal basis $\{\varphi_i\}_{i=1}^\infty$ of the  single-particle Hilbert space $\H$ for which $\Psi$ is an MPS with bond dimensions $r_i=2$ for $i$ even, $r_i=3$ for $i$ odd and $>1$, and $r_1=2$. In particular, there is an MPS representation with maximal bond dimension $3$. 

Moreover the value $3$ is minimal, that is, not all two-particle states can be represented by an MPS with bond dimension $2$.
%, that is to say the corresponding statement with $r_i=2$ for all $i$ is false.
\end{theorem}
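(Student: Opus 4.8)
The plan is to prove the two assertions of Theorem \ref{thm:onto} separately, reusing the finite-dimensional machinery of Section \ref{sec:two_particle_finite} and supplying only the extra analytic ingredient needed for the half-infinite chain. For the existence part I would first invoke the infinite-dimensional analogue of the canonical form \eqref{eq:two_particle_rep}: since $\gamma_\Psi$ is a non-negative trace-class operator with $\operatorname{tr}\gamma_\Psi = 2$, the spectral theorem gives orthonormal natural orbitals whose eigenvalues pair up, so that $\Psi = \sum_{\ell=1}^\infty \lambda_\ell\, \varphi_{2\ell-1}\wedge\varphi_{2\ell}$ with $\sum_\ell |\lambda_\ell|^2 = \norm{\Psi}^2 < \infty$; I complete $\{\varphi_i\}$ to an orthonormal basis of $\H$ (padding with $\lambda_\ell=0$ if $\gamma_\Psi$ has finite rank). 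I would then take the $A_i[\mu_i]$ to be exactly the fixed bulk matrices from the proof of Theorem \ref{thm:upper_bound} — the left-boundary pair $A_1,A_2$ together with the repeating pair $A_{2\ell-1},A_{2\ell}$ for $\ell\ge 2$ — which depend only on the $\lambda_\ell$, hence only on $\Psi$, and not on the truncation length. The crucial computation, done with the matrix lemma exactly as before, is that contracting the first $L=2K$ of these matrices against the boundary vector $(0,\dots,0,1)^{\mathsf T}$ in \eqref{eq:MPS_infinite} reproduces the coefficient tensor of the truncation $\Psi_K := \sum_{\ell=1}^K \lambda_\ell\, \varphi_{2\ell-1}\wedge\varphi_{2\ell}$ (and odd length $L=2K+1$ gives the same state, the extra orbital being forced unoccupied by the last matrix). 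Strong convergence of the defining limit is then immediate from $\norm{\Psi-\Psi_K}^2 = \sum_{\ell>K}|\lambda_\ell|^2 \to 0$, and the bond dimensions are $r_1=2$, $r_i=2$ for $i$ even, $r_i=3$ for $i$ odd and $>1$, as claimed.

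For minimality I would argue by contradiction, choosing $\Psi$ with $\gamma_\Psi$ of \emph{infinite} rank (i.e. $\lambda_\ell\ne 0$ for all $\ell$) and supposing it were an MPS with all $r_i\le 2$ in some orthonormal basis $\{\psi_i\}$. By Lemma \ref{L:rank} every cut-rank equals $\rk M_n$ for the unfolding $M_n=C^{\mu_1\dots\mu_n}_{\mu_{n+1}\dots}$, which for a two-particle state decomposes additively as $\rk M_n = \rk D_n + a_n + b_n$, where $D_n$ is the one-particle-on-each-side block and $a_n,b_n\in\{0,1\}$ record whether a pair lies entirely to the right, respectively left, of the cut (this block structure is visible in \eqref{eq:unfolding_max_bonddim}). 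I would then rerun the combinatorial argument of Theorem \ref{thm:optimality} essentially verbatim: its recombination steps show that $\Psi$ lives on finitely many orbitals whenever a boundary block vanishes together with $\rk D_n\le 1$, which is now an outright contradiction because $\gamma_\Psi$ has infinite rank. Hence the preliminary estimate $\rk D_n\ge 2$ holds at every cut, the only remaining possibilities for a rank-$2$ cut are the two ``bad'' cases, and for every consecutive pair $(M_{2\ell},M_{2\ell+1})$ these force the orbital $\psi_{2\ell+1}$ to be absent from $\Psi$.

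Thus all odd-indexed orbitals $\psi_3,\psi_5,\dots$ are orthogonal to the range $W$ of $\gamma_\Psi$. Contracting out these (trivially acting) absent orbitals leaves a bond-dimension-$\le 2$ MPS on $\{\psi_1,\psi_2,\psi_4,\psi_6,\dots\}$, to which the same conclusion applies after relabelling; iterating, every $\psi_i$ with $i\ge 3$ is eventually shown orthogonal to $W$, forcing $W\subseteq\spann\{\psi_1,\psi_2\}$ and hence $\rk\gamma_\Psi\le 2$, contradicting infinite rank. This shows bond dimension $2$ cannot suffice for all two-particle states, so $3$ is minimal.

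The routine analytic point — the canonical form together with the $\ell^2$-convergence of the truncations — is not where the difficulty lies. The main obstacle is the minimality argument: the finite lower bound of Theorem \ref{thm:optimality} rests on the \emph{full}-rank hypothesis $\rk\gamma_\Psi=L$, which has no literal counterpart on a half-infinite chain whose optimizing basis may involve infinitely many orbitals. The crux is therefore to identify the correct substitute — infinite rank of $\gamma_\Psi$, which turns each finite recombination estimate into a genuine contradiction — and to control the iteration so that the support is squeezed all the way down to $\spann\{\psi_1,\psi_2\}$; verifying that the contraction of absent orbitals preserves both the bond-dimension bound and the infinite rank at every stage is the step that requires the most care.
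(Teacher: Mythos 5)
Your existence argument coincides with the paper's own proof of Theorem \ref{thm:onto}: natural-orbital normal form, the fixed ($L$-independent) matrices from the construction in Theorem \ref{thm:upper_bound}, the check that contracting a length-$2K$ (or $2K{+}1$) chain against the boundary vector $(0,\dots,0,1)^{\mathsf T}$ reproduces the truncation, and strong convergence from the $\ell^2$ tail; no issues there.

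The minimality half is where you depart from the paper --- which disposes of it in one line by invoking Theorem \ref{thm:optimality} --- and it contains a genuine gap. Your pivotal claim is that \emph{both} recombination steps from the proof of Theorem \ref{thm:optimality} show that $\Psi$ lives on finitely many orbitals ``whenever a boundary block vanishes together with $\rk D_n\le 1$,'' so that infinite rank of $\gamma_\Psi$ yields an outright contradiction. This is true only for the vanishing \emph{first row} (no pair entirely to the right of the cut): there the columns of $D_n$ are proportional and $\Psi$ is supported on $\{\psi_1,\dots,\psi_n,\tilde\psi\}$, i.e.\ on $n+1$ orbitals. For the vanishing \emph{first column} (no pair entirely to the left) with $\rk D_n\le 1$, the recombination instead gives
$\Psi=\sum_{n+1\le r<s}c_{rs}\,|\psi_r,\psi_s\rangle+\sum_{s>n}c_{1s}\,|\tilde\psi,\psi_s\rangle$,
supported on $\{\tilde\psi\}\cup\{\psi_s\}_{s>n}$ --- infinitely many orbitals, perfectly compatible with $\rk\gamma_\Psi=\infty$. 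In the finite proof this case was killed by the estimate $\rk\gamma_\Psi\le L+1-n<L$, whose strength evaporates when $L=\infty$; states of the type $\psi_1\wedge\tilde\psi+\sum_{\ell\ge 2}\lambda_\ell\,\psi_{2\ell-1}\wedge\psi_{2\ell}$ realize exactly this configuration at small cuts with infinite rank. Consequently, writing $\rk M_n=\rk D_n+a_n+b_n$ as you do, the case $(\rk D_n,a_n,b_n)=(1,1,0)$ is \emph{not} excluded, your reduction to the two ``bad'' cases fails, and the consecutive-pair argument and the subsequent contract-and-iterate scheme no longer force the odd-indexed orbitals to be absent. (Two smaller imprecisions: your ``preliminary estimate $\rk D_n\ge 2$ at every cut'' is misstated --- even in the finite case one only gets $\rk D_n\ge 2$ \emph{when a boundary block vanishes}; and Lemma \ref{L:rank} is a finite-dimensional statement, so in the half-infinite setting you should note that only the easy inequality $\rk M_n\le r_n$ is needed, which follows directly from the limit representation \eqref{eq:MPS_infinite}.)

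The good news is that your substitute hypothesis --- infinite rank of $\gamma_\Psi$ --- rescues the argument by a much shorter route than your iteration. Since $\Psi\neq 0$ there is $c_{i_0j_0}\neq 0$, so $b_n=1$ for every $n\ge j_0$; the bound $\rk M_n\le 2$ then forces $\rk D_n+a_n\le 1$ at all such cuts. The sub-case $\rk D_n=1$, $a_n=0$ is excluded by the (valid) first-row recombination, which gives finite support and contradicts infinite rank. Hence $D_n=0$ for every $n\ge j_0$, i.e.\ $c_{rs}=0$ whenever $r\le n<s$ for some $n\ge j_0$; choosing $n=\max(r,j_0)$ shows $c_{rs}=0$ for all $s>j_0$, so $\Psi$ is supported on $\{\psi_1,\dots,\psi_{j_0}\}$ and $\rk\gamma_\Psi\le j_0$ --- a contradiction, with no pairing of consecutive cuts and no iteration needed. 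With this repair your approach actually delivers a self-contained infinite-dimensional lower bound (every infinite-rank two-particle state needs bond dimension $3$ somewhere), which is stronger and more explicit than the paper's terse reduction of the ``not all states'' claim to the finite-dimensional Theorem \ref{thm:optimality}.
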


% \begin{Cor}
% For two-electron systems, QC-DMRG with optimal fermionic mode transformation is exact for bond-dim $M=3.$
% \end{Cor}
%\textcolor{red}{revised until here}

%

\begin{proof}
Let $\Psi$ be in $\H\wedge \H$, and let $\{\phi_i\}_{i=1}^\infty$ be an ONB of $\H$ consisting of eigenstates of $\gamma_\Psi$ (i.e., of natural orbitals),  ordered by size of the eigenvalue of $\gamma_\Psi$. 
After a unitary transformation in each eigenspace, $\Psi$ has the normal form 
\begin{align*}
    \Psi = \sum_{\ell =1}^\infty \lambda_\ell |\phi_{2\ell -1} \phi_{2 \ell} \rangle,
\end{align*}
with $\sum_{\ell=1}^\infty |\lambda_\ell|^2 = ||\Psi||^2 < \infty$. 
For $L$ even, define 
\begin{align*}
     \Psi_L = \sum_{\ell =1}^L \lambda_\ell |\phi_{2\ell -1} \phi_{2 \ell} \rangle. 
\end{align*}
 Applying now our analysis from Section \ref{sec:two_particle_finite} gives that $\Psi_L$ has a representation of the form \eqref{eq:mps_representation_finite} with {$L$-dependent} tensors $A_1^{(L)}[\mu_1], \ldots, A_L^{(L)}[\mu_L]$.
 
 By inspection these $A^{(L)}_i$ only depend on the coefficients {$\lambda_\ell$} up to $\lceil  \tfrac{i}{2}\rceil $, so they are independent of $L$ for $L \gg i$; thus denote these by $A_i$. As in Section \ref{sec:two_particle_finite}, let $B_\ell[\mu_{2\ell-1},\mu_{2\ell}]=A_{2k-1}[\mu_{2\ell-1}]A_{2\ell}[\mu_{2\ell}]$, whence $B_{\ell}$ is given by eq.~\eqref{eq:B}. Now let us compute the expression inside the limit in eq.~\eqref{eq:MPS_infinite}. When $L$ is even, that is, $L=2k$ for some integer $k$, we have
 $$
     B_k[\mu_{2k-1},\mu_{2k}] \begin{pmatrix} 0 \\ 1 \end{pmatrix} = \begin{pmatrix} 
     \lambda_k \delta_{11}^{2k}(\mu_{2k-1},\mu_{2k}) \\
     \delta_{00}^{2k}(\mu_{2k-1},\mu_{2k}) \end{pmatrix}
 $$
 and therefore 
 $$
     \Bigl(\prod_{\ell=1}^k B_{\ell}[\mu_{2\ell-1},\mu_{2\ell}]\Bigr) \begin{pmatrix} 0  \\ 1 \end{pmatrix} = \sum_{\ell=1}^k \lambda_\ell \delta_{11}^{\ell}\prod_{\substack{i=1, \\i\neq \ell}}^k \delta_{00}^i. 
 $$
 It follows that 
 \begin{equation} \label{eq:truncsum}
  \sum_{\mu_1,\ldots \mu_L =0}^1  \! A_1[\mu_1] A_2[\mu_2] ... A_L[\mu_L] 
   \begin{pmatrix} 0 \\ 1 \end{pmatrix}
   ~\Phi_{\mu_1 ... \mu_L} = \Psi_{2k}.
 \end{equation}
 For $L$ odd, that is, $L=2k+1$ for integer $k$, one finds analogously that
 $$
      \Bigl(\prod_{i=1}^{2k+1} A_{i}\Bigr)  \begin{pmatrix} 0  \\ 0 \\ 1 \end{pmatrix}
      = \Bigl(\prod_{i=1}^k B_i\Bigr) A_{2k+1}  \begin{pmatrix} 0 \\ 0 \\ 1 \end{pmatrix} = \sum_{\ell=1}^k \lambda_\ell \delta_{11}^{\ell} \! \prod_{\substack{i=1, \\i\neq \ell}}^k \! \delta_{00}^i \; \delta_0(\mu_{2k+1})
 $$
 and therefore the left hand side in eq.~\eqref{eq:truncsum} is again given by $\Psi_{2k}$. 
 Since $\Psi_L$ converges by construction to $\Psi$, we obtain
 \begin{align*}
     \Psi = \limit{L}{\infty} \Psi_L 
     =  \lim_{L\to\infty} \sum_{\mu_1,\ldots \mu_L =0}^1  \! A_1[\mu_1] A_2[\mu_2] ... A_L[\mu_L] \begin{pmatrix} 0 \\ \vdots \\ 0 \\ 1 \end{pmatrix} ~\Phi_{\mu_1 ... \mu_L}.
 \end{align*}
 Thus $\Psi$ has an MPS {representation} with the asserted bond dimensions with respect to our natural orbital basis. 
 
 The fact that $\Psi$ does not in general belong to the set of MPS with bond dimension $2$ regardless of the choice of basis follows directly from Theorem \ref{thm:optimality}.
\end{proof}

As a corollary, the exactness of the QC-DMRG method combined with fermionic mode transformations for two-electron systems (Theorem \ref{thm:exact}) generalizes in a straightforward manner to the full infinite-dimensional single-particle Hilbert space $\H=L^2(\R^3)\otimes\C^2$ for electrons. The resulting versions of \eqref{eq:qc_dmrg_enrgies}--\eqref{eq:mode_optimized_excited} constitute an exact reformulation of (time-independent) two-electron quantum mechanics.

\section{Conclusions and Outlook}
We have shown that the QC-DMRG method combined with fermionic mode optimization is exact for two-electron systems with the (extremely low) bond dimension $M=3$. This can be viewed as a theoretical contribution towards explaining the remarkable success of the QC-DMRG method in practical computations, and as a theoretical argument in favour of including mode optimization. The numerical favourability of the latter was emphasized in \cite{krumnow-legeza2016}. 

An interesting theoretical question beyond the scope of the present paper is whether any analoga of our findings hold for larger particle numbers provided the Hamiltonian is of two-body form. In the two-electron case investigated here, this form was satisfied automatically; in electronic structure it continues to be satisfied for arbitrary particle numbers.

\section*{Acknowledgements}

Support by the Deutsche Forschungsgemeinschaft (DFG, German Research Foundation) -- Project number 188264188/GRK1754 within the International Research Training Group IGDK 1754 
%
%International Research Training Group
%IGDK Munich - Graz funded by the Deutsche Forschungsgemeinschaft (DFG, German Research Foundation) - Projektummer 188264188/GRK1754 
is gratefully acknowledged.

% comment out bibtex stuff and insert .bbl file, for arxiv
%\bibliographystyle{abbrv}
%\bibliography{references}

\end{document}